\newenvironment{subtheorem}[1]{%
  \def\subtheoremcounter{#1}%
  \refstepcounter{#1}%
  \protected@edef\theparentnumber{\csname the#1\endcsname}%
  \setcounter{parentnumber}{\value{#1}}%
  \setcounter{#1}{0}%
  \expandafter\def\csname the#1\endcsname{\theparentnumber.\Alph{#1}}%
  \ignorespaces
}{%
  \setcounter{\subtheoremcounter}{\value{parentnumber}}%
  \ignorespacesafterend
}
\newcounter{parentnumber}
\newtheorem{thm}{Proposition}
\newtheorem{lemma}{Lemma}
\newtheorem{remark}{Remark}
\newtheorem{assumption}{Assumption}
\def\l{\left}
\def\r{\right}
\def\({\left(}
\def\){\right)}
\def\br{{\mathbf{r}}}
\def\bs{{\mathbf{s}}}
\def\bu{{\mathbf{u}}}
\def\bx{{\mathbf{x}}}
\def\by{{\mathbf{y}}}
\def\bz{{\mathbf{z}}}
\def\b0{{\mathbf{0}}}
\def\bC{{\mathbf{C}}}
\def\bD{{\mathbf{D}}}
\def\bY{{\mathbf{Y}}}
\def\bZ{{\mathbf{Z}}}
\newcommand{\nn}{\nonumber}
\def\papertitle{\Huge Inference from Randomized Transmissions by Many Backscatter Sensors}
\begin{document}

\title{ \fontsize{21}{21}\selectfont \papertitle}
\author{Guangxu Zhu, Seung-Woo Ko and Kaibin Huang
\thanks{ G. Zhu, S.-W.~Ko and K.~Huang are with the Department of Electrical and Electronic Engineering, The University of Hong Kong, Pok Fu Lam, Hong Kong (e-mail: gxzhu@eee.hku.hk, swko@eee.hku.hk, haungkb@eee.hku.hk). 
}}
\maketitle

\vspace{-15mm}

\begin{abstract}
Attaining the vision of Smart Cities requires the deployment of an enormous number of sensors for monitoring various conditions of the environment ranging from air quality to traffic. Backscatter sensors have emerged to be a promising solution for two reasons. First, transmissions by backscattering allow sensors to be powered wirelessly by radio-frequency (RF) waves, overcoming the difficulty in battery recharging for billions of sensors. Second, the simple backscatter hardware leads to low-cost sensors suitable for large-scale deployment. On the other hand, backscatter sensors with limited signal-processing capabilities are unable to support conventional algorithms for multiple access and channel training. Thus, the key challenge in designing backscatter sensor networks is to enable readers to accurately detect sensing values given simple ALOHA random access, primitive transmission schemes, and no knowledge of channel states and statistics. We tackle this challenge by proposing the novel framework of \emph{backscatter sensing} (BackSense) featuring random encoding at backscatter sensors and statistical inference at readers. Specifically, assuming the widely used on/off keying for backscatter transmissions, the practical random-encoding scheme causes the on/off transmission of a sensor to be randomized and follow a distribution parameterized by the sensing values. Facilitated by the scheme, statistical inference algorithms are designed to enable a reader to infer sensing values from randomized transmissions by multiple backscatter sensors. The specific design procedure involves the construction of \emph{Bayesian networks}, namely deriving conditional distributions for relating unknown parameters and variables (including sensing values, noise power, sensing measurements, number of active sensors) to signals observed by the reader. Then based on the Bayesian networks and the well-known \emph{expectation-maximization} (EM) principle, inference algorithms are derived to recover sensing values. Simulation of the BackSense system demonstrates high accuracy in reader inference despite the mentioned limitations of backscatter sensors, which grows with increasing numbers of received symbols and reader antennas. 
\end{abstract}


\vspace{-3mm}
\section{Introduction}\label{Section:Introduction}

Realizing the visions of \emph{Internet-of-Things} (IoT) and Smart Cities will require the deployment of billions of wireless IoT sensors in our society for automating a wide range of applications such as health care, smart  homes, pollution and traffic monitoring, and industrial control. Backscatter transmission has emerged to be a promising solution for tackling some key challenges on deploying  large-scale sensor networks. The main feature of a backscatter radio  is to transmit by  backscattering  and modulating   an incident \emph{radio-frequency} (RF) wave \cite{Boyer2014}. This allows backscatter sensors to be powered wirelessly by a reader, overcoming the difficulty of  battery recharging for a massive number of sensors. Furthermore, requiring no  oscillators and RF components, backscatter sensors can be manufactured to have a small form factor and low cost. Thereby, they are  suitable for  large-scale deployment.  Last, with recent advancements in multi-antenna beamforming and low-power electronics, the ranges for backscatter links have been increased from several meters in the classic RFID applications to tens of meters in state-of-the-art systems \cite{Yang2015, Kimionis2014}. This makes it possible to collect measurement data from backscatter sensors using versatile  mobile readers mounted on  vehicles and  \emph{unmanned aerial vehicles} (UAVs) (see Fig.~\ref{Fig:UAV_sensing}). 
Motivated by the potential of backscatter sensor networks, this work presents  a novel  design framework, called randomized \emph{backscatter sensing} (BackSense),  for sensing-data uploading and detection based on random access and  machine learning. To be specific, to resolve transmission collisions, statistical-inference algorithms are designed for  readers to directly infer sensing values from collided signals by exploiting their spatial correlation and partial information of their distributions. The algorithms build on  a proposed scheme of randomized on/off transmissions targeting  backscatter sensors. 

\begin{figure}[t]
\centering
\includegraphics[width=12cm]{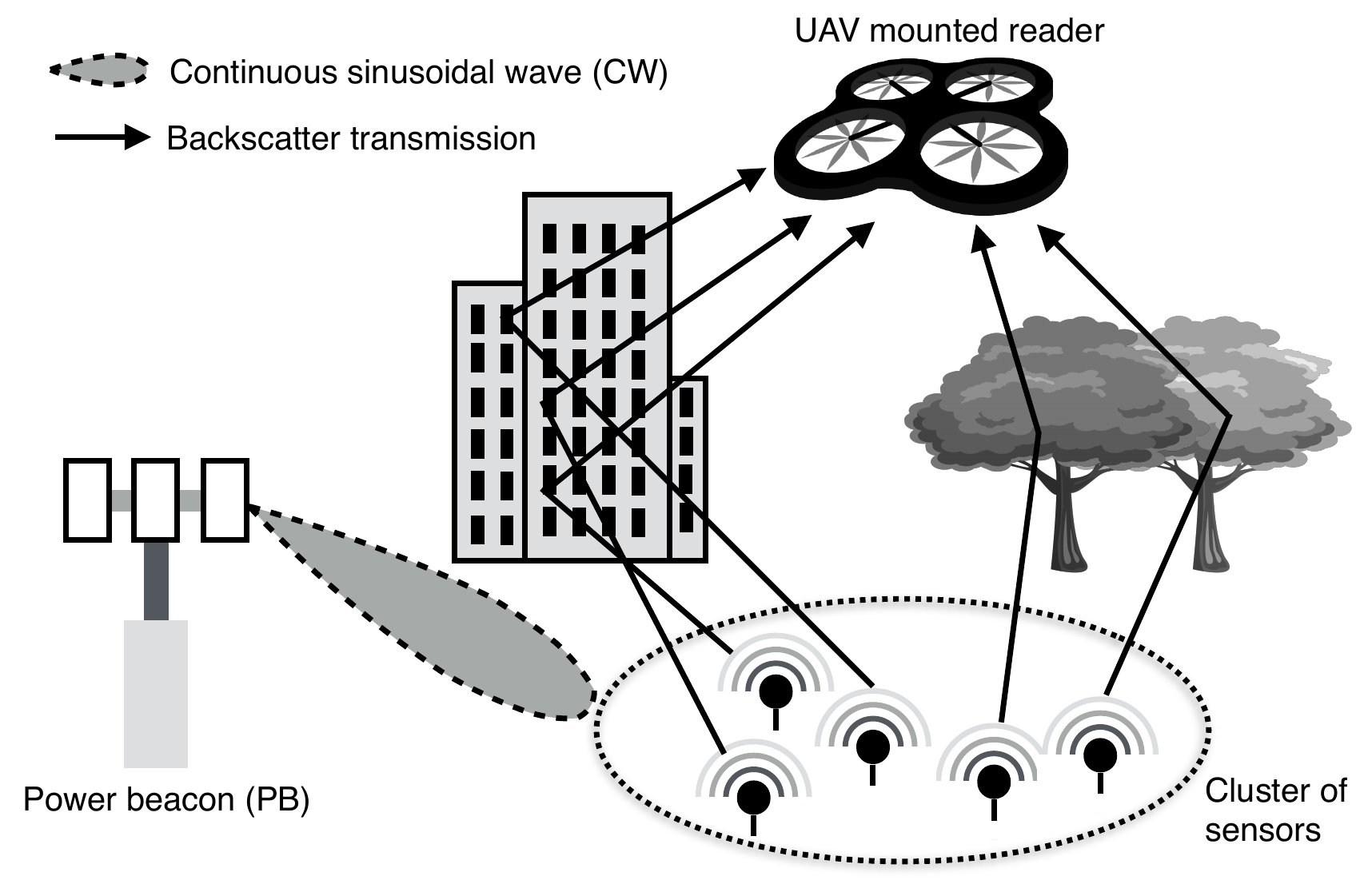}
\caption{A BackSense system comprising  a cluster of sensors deployed at a remote or hazardous location, a PB wirelessly powering the sensors, and a UAV mounted reader for collecting sensing data.}
\vspace{-6mm}
\label{Fig:UAV_sensing}
\end{figure}

\subsection{Backscatter Communications and Networking}
Backscatter communications are expected to play a key role in next-generation sensor systems  e.g., IoT and Smart Cities. These applications motivate   growing research on developing advanced techniques for backscatter communications that are more complex than those for the traditional RFID applications \cite{Boyer2014, Clerckx2017, Qian2017, Alevizos2017, Liu2017,Han2017}. The modulation and space-time coding schemes were designed in \cite{Boyer2014} for enhancing the rate and reliability of backscatter communication links. Traditionally, a reader transmits a  single sinusoidal wave  for powering a backscatter tag but such a design is sub-optimal in terms of energy harvesting efficiency. Instead, a waveform design superimposing multiple sinusoidal waves was  proposed  in  \cite{Clerckx2017} for enhancing the efficiency, via optimization  based on the harvester-response function. For backscatter tags with low-complexity, non-coherent detection schemes are preferred due to the simple transmitter architecture. Specific schemes based on on/off keying or frequency-shift keying were  proposed in \cite{Qian2017} and \cite{Alevizos2017},  respectively.     
The channel-utilization efficiency for  backscatter communications  tends to be low as the primitive backscatter architecture is incapable of supporting sophisticated transmission designs such as high-order modulation or spatial multiplexing. An attempt was  made in \cite{Liu2017} on improving the efficiency by designing full-duplex backscatter communication networks. In this design,  full-duplexing is realized by \emph{double} (coherent and non-coherent) modulating  backscattered signals and using time-hopping spread spectrum to mitigate interference. Last, to extend the ranges as well as simultaneoully power a large number of sensors, a novel architecture for backscatter communication networks was recently proposed in  \cite{Han2017} where tags are powered by distributed \emph{power beacons} (PBs) instead of readers. Then the tradeoff between network coverage and  PB density was studied therein  using a stochastic-geometric network model. 

A key design issue for backscatter sensor networks is multiple access by sensors. The natural design approach is to adopt  classic multiple-access schemes including  random access \cite{Zhen2005, Isik2009}, \emph{time division multiple access} (TDMA) \cite{Zheng2016, Hoang2017}, \emph{space-division multiple access} (SDMA) \cite{Angerer2010,Kim2011}, and \emph{code division multiple access} (CDMA)  \cite{Demeechai2011,Wang2012}. As mentioned, for large-scale sensors networks, these schemes have   the drawbacks of excessive network overhead due to protocols and algorithms for orthogonalizing  transmissions   as well as long latency and low spectrum-utilization  efficiency. Moreover, the required signal processing (e.g.,  FFT,  channel estimation and feedback)  are impractical for backscatter  sensors with a primitive   hardware architecture. On the other hand, ALOHA-type random access does not have the above drawbacks but suffers from performance degradation caused by  transmission collisions. A novel approach for  resolving collisions is proposed  in \cite{Wang2012}. Specifically, assuming sparse random access by backscatter tags, collided signals are treated as a sparse code and recovered using the techniques of compressive sensing and rateless coding. The design is effective only in the scenario of dense but mostly inactive sensors and thus cannot support the deployment of dense sensor  networks with many active sensors. 

\subsection{Machine Learning for Wireless Sensor Networks} 
Machine learning provides  a rich set of techniques for learning and prediction of data. Recent breakthroughs in the field especially the area of artificial intelligence motivated researchers to apply relevant techniques for bringing intelligence to communication systems and networks   (e.g., UAV assisted networking  \cite{Chen2017_ICC2} and molecular communication \cite{farsad2017detection}). In particular, a thrust of the research focuses on applying machine-learning techniques  to streamline  various  operations of sensors networks such as routing  \cite{Barbancho2007}, abnormality  detection \cite{Krishnamachari2004}, positioning and  network topology discovery \cite{Morelande2008}, and sensing-data compression and  feature extraction \cite{Kho2009}. 

The existing research most relevant to the current work is the application of statistical-learning techniques to detection of sensing values  from signals transmitted by wireless sensors based on random access \cite{Mergen2006,Chen2006,Glaropoulos2015, liu2006rlmac, shen2008broadcast}.  In  \cite{Mergen2006,Chen2006}, the sensors are designed such that their noisy measurements received  by the reader provide sufficient statistics of the measured environmental parameter, thereby allowing it to be inferred using the classic maximum-likelihood algorithm. This requires each sensor to estimate and inverse its channel to the reader and furthermore map the noisy measurement to a pre-determined set of orthogonal waveforms. Such operations, however, are too complex for a backscatter sensor that usually supports only  primitive signal processing and modulates signals using the simple on/off keying. In a series of other  research,  advanced machine-learning techniques were applied to  enhance the performance of the  \emph{carrier-sensing multiple access} (CSMA) scheme \cite{Glaropoulos2015, liu2006rlmac, shen2008broadcast}.  In \cite{Glaropoulos2015}, cognitive transmitting sensors  rely on statistical inference to infer the channel status (idle or busy) at the receiver based on local measurements. Such information allows  the sensors to intelligently  switch between active and sleep modes  to avoid collision and reduce energy consumption. A different kind of cognitive sensor is designed in  \cite{liu2006rlmac} based on reinforcement learning to have the capability of adapting its  duty cycle according to the traffic load and channel state, thereby reducing its energy consumption as well as enhancing  network throughput.  Another  technique, namely neural network, is applied in  \cite{shen2008broadcast} to learn the optimal  scheduling policy that minimizes a schedule cycle such that all sensors can be scheduled within each cycle without any collision. The  existing designs of cognitive transmitters are impractical for backscatter sensors again due to their low complexity and limited signal-processing capability. The  more practical approach, especially for large-scale sensor networks, should be one reducing sensor complexity and compensating for it by centralized  machine learning at readers or in the cloud. 

\subsection{Contributions and Organization}
Based on this approach, we design techniques for efficient implementation of backscatter sensor networks. Specifically, adopting the network architecture proposed in \cite{Han2017}, we consider a system illustrated in Fig. \ref{Fig:UAV_sensing}  where a cluster of backscatter sensors are powered wirelessly by a PB for transmission to a mobile reader  (also called an information collector) mounted on e.g., a UAV. The BackSense  framework for designing such a  system satisfies  several practical constraints. First, the transmission of each sensor is  based on on/off keying.  Second, multiple access by sensors is based on the simple ALOHA-type  random access (without carrier sensing and scheduling). Last, both sensors and reader  have neither   \emph{channel state information} (CSI) nor parameters of channel distribution. The only knowledge the reader has about the channels is their distribution type. Under these constraints, the BackSense framework is designed for efficient sensing-data uploading and  accurate detection. To the best of the authors' knowledge, this current work represents the first attempt on applying  machine learning  to the design of backscatter-sensor   systems. The resultant BackSense framework  comprises two key  components: one is \emph{randomized transmissions} by sensors and the other \emph{statistical inference} at the reader, described as follows.

{\bf Randomized transmission for a backscatter sensor:} Under the constraint of on/off keying, a novel random-encoding scheme for sensor transmission   is proposed for embedding the sensing value into the distribution of backscattered signals.  Mathematically, the sensor state (backscatter or not)  is distributed as a Bernoulli \emph{random variable} (r.v.) whose distribution is parameterized by the sensing value. In other words, the sensing value governs the transmission probability. The randomized transmission  is realized by the proposed encoder design that one-to-one maps the sensing value using a sigmoid function (one with a ``S" shape) to yield a variable with  a normalized range and then comparing the result with a uniform r.v. in the same  range, generating a sequence of binary bits. The bits switch the connection of the sensor antenna with either of two load impedances and as the result, randomly turn backscatter transmission on/off, thereby modulating the bits by on/off keying. 

{\bf Statistical inference at the reader:} Statistical inference is capable of estimating parameters of a signal distribution based on signal observation. Using the theory and given randomized transmission by sensors, the reader is designed to infer their transmission probabilities (or equivalently their sensing values) from observations over time and antennas. To this end, algorithms for reader inference  are designed based on the \emph{expectation-maximization} (EM) framework. A simpler version of the algorithms are  designed assuming no measurement noise and the assumption is relaxed subsequently. To design the algorithms, a Bayesian network is constructed that relates the signals received by the reader to the sensing values via a number of fixed or random variables including the mapped sensing values, number of active sensors, variances of channel noise and channel gains. Their relations are specified by their  conditional distributions as derived. Building on the Bayesian network, the specific EM-based  algorithms targeting the BackSense system are developed by deriving two iterative steps, called  the Expectation and Maximization steps, for both the criteria of \emph{maximum likelihood} (ML) and \emph{maximum a posteriori} (MAP). 

The inference algorithms are  extended to the case with measurement noise. As a result,  the Bayesian network in the preceding  case should be modified to include a set of new r.v., modelling measurement noise, as many as  the number of  sensors. This  dramatically increases the dimensionality of the \emph{latent-space}, the space of latent variables (r.v. unobserved by the reader). As the result, the  corresponding EM-based algorithms become  computationally demanding and  thus impractical. To overcome the difficulty, we exploit an approximate implementation of the EM-framework based on the \emph{variational-inference} method \cite{beal2003variational}. Its basic idea for complexity reduction is to  restrict the approximate posterior distribution of the latent variables to take a factorized form. This converts the required a larger number of nested  integrals over the latent-variable  space in the EM framework to  parallel integrals, enabling computationally-efficient implementation. Based on the method, we develop the EM-based  algorithms for inference at the reader for the ML and MAP criteria. 

The remainder of the paper is organized as follows.  Section II introduces the system model and  problem formulation. Section III presents the randomized transmission scheme at tags. The EM-based algorithms for reader inference are designed  in Sections IV and V for the cases without and with  measurement noise,  respectively.  Simulation results are provided in Section VI, followed by concluding remarks in Section VII.

\section{System Model and Problem Formulation}\label{Section:SystemModel} 

\subsection{System Model}
We consider a sensing  system (see Fig. \ref{Fig:Architecture}) comprising a PB, a reader equipped with $M$ antennas, and $N$ single-antenna backscatter sensors. Provisioned with reliable power supply, the PB is equipped with an antenna array and able to power the sensors by energy beamforming. The reader uses the antennas for receiving simultaneous signals transmitted by the sensors.  In the following, the models of backscatter sensor, wireless channels  and sensing values are described. 

Provisioned with a backscatter antenna, each sensor reflects a fraction of \emph{continuous wave} (CW) back to the reader and harvests the energy of the remaining fraction for powering the sensor circuit.  In the process, the sensor modulates the reflected CW as follows.  First of all, note  that the  reflection coefficient~$S$, the ratio between incident and reflected CWs, depends on the level of mismatch between the antenna and load impedances. Mathematically,  $S=\frac{Z_L-{Z_A}^*}{Z_L+{Z_A}^*}$, where $Z_L$ and $Z_A$ are load and antenna impedances, respectively \cite{Boyer2014}. The sensor modulates the backscattered CW by adapting the reflection coefficient that determines the phase and magnitude of the wave \cite{Boyer2014}.  The coefficient variation  can be implemented  by switching over a set of load impedances (see Fig. \ref{Fig:Architecture}). We consider the  modulation scheme based on on/off keying that is widely used for backscatter transmission for its simplicity. The scheme requires switching between two load impedances $Z_1 $ and $Z_2$ with $Z_1 \neq Z_A^*$ and $Z_2 = Z_A^*$. In other words, for a sensor, switching to $Z_1$ turn on  backscatter transmission ($S  = \frac{Z_1-{Z_A}^*}{Z_1+{Z_A}^*} = \bar \rho$) and to $Z_2$ turns transmission off ($S = 0$). 

The channel model is described as follows.  Two kinds of channels are cascaded. One is from the PB to the sensors for \emph{energy transfer} (ET) and 
the other is from the sensors to the reader for \emph{information transfer} (IT).  
First, given energy beamforming and assuming that the size of the sensor cluster is much  smaller than its distance to the PB, ET channels can be modelled as \emph{line-of-sight} (LOS) channels with identical path losses. Second, with longer distances and isotropic wave propagation, IT channels  are assumed to be characterized by rich scattering and hence modelled  as Rayleigh fading. Specifically, time is slotted into symbol durations and we consider block fading  where channels coefficients are  \emph{independent and identically distributed} (i.i.d.) over  different time slots. The IT channel  in slot $i$ is denoted as a $M \times N$ matrix $\mathbf{H}^{(i)}$,  whose $(\ell,n)^{\mathrm{th}}$ element, denoted by $h_{\ell,n}^{(i)}$, represents the coefficient from sensor $n$ to the reader's antenna $\ell$. 
It is assumed that all channel  coefficients $\{h_{\ell,n}^{(i)}\}$ are i.i.d.  $\mathcal{CN}(0, \sigma_h^2)$ r.v.. 
Let $S_n^{(i)}$ be the reflection coefficient of sensor  $n$ in slot  $i$. 
The vector collecting the reflection coefficients for all $N$ sensors in slot $i$ is then denoted by
$\bs^{(i)}=[S_1^{(i)}, S_2^{(i)}, \cdots, S_N^{(i)}]^{\mathsf{T}}$. 
Given $\bs^{(i)}$, the received signal at the reader in slot $i$, denoted by $\mathbf{y}^{(i)}=[y_1^{(i)}, y_2^{(i)}, \cdots, y_M^{(i)}]^{\mathsf{T}}$, is as follows\footnote{The interference from PB to the reader can be easily canceled as it is a CW and thus is not considered.}
\begin{align}\label{Eq:Signalmodel}
\mathbf{y}^{(i)}=\sqrt{G P_t}\mathbf{H}^{(i)}\bs^{(i)}+\mathbf{w}^{(i)}, \qquad i=1, \cdots, L,  
\end{align}
where $\mathbf{w}^{(i)}=[w_1^{(i)}, w_2^{(i)}, \cdots, w_M^{(i)}]^{\mathsf{T}}$ is the \emph{additive white Gaussian noise} (AWGN) with the entries following i.i.d. $\mathcal{CN}(0, \sigma_w^2)$ distributions, $L$ denotes the length of the observation  period in slot, $G$ captures the energy beamforming gain, and $P_t$ is the transmit power of  the PB. Without loss of generality, $G$ and $P_t$ are set as one to simplify notation.

\begin{assumption}[CSI Free] \label{Asumption:NoCSI}\emph{No CSI of individual
backscatter channels is  available at both of the reader and the sensors.  The reader only has the knowledge of distribution type  of  ${\bf H}^{(i)}$ and ${\bf w}^{(i)}$ (i.e., complex Gaussians),  but not the distribution parameters $\sigma_h^2$ and $\sigma_w^2$.}
\end{assumption}

The sensing values measured by different sensors are spatially correlated (e.g., temperature, humidity, and wind strength). The model of their joint distribution is described as follows. Let the sensing values to be measured by $N$ sensors be denoted by $\bx = [x_1, x_2, \cdots, x_N]^{\mathsf{T}}$. It is assumed that the prior  knowledge of the distribution of $\bx$ is available at reader  by estimation using historical measurement data.   For tractability, $\bx$ is assumed to follow the multivariate Gaussian distribution  $\bx \sim {\cal N}(\boldsymbol \mu_\bx, \boldsymbol \Sigma_\bx)$ with mean $\boldsymbol \mu_\bx$ and covariance matrix $\boldsymbol \Sigma_\bx$. The \emph{probability density function} (PDF) of $\bx$ is then given by
\begin{align}\label{PDF of x}
p(\bx) = \frac{1}{(2\pi)^{\frac{N}{2}} |\boldsymbol \Sigma_\bx|^{\frac{1}{2}}}\exp\l( -\frac{1}{2} (\bx - \boldsymbol \mu_\bx)^T \boldsymbol \Sigma_\bx^{-1} (\bx - \boldsymbol \mu_\bx)\r).
\end{align}

\begin{assumption}[Time-Scale Differentiation]
\emph{The environmental features usually vary  much slower than wireless channels especially given reader mobility. Thereby,  the sensing values $\mathbf{x}$ are assumed to remain unchanged during the whole observation window  of  $L$ time slots, while wireless channels are i.i.d. over slots.} 
\end{assumption}

In addition, the sensor measurement of  a sensing value $x$, denoted as $\tilde{x}$, is corrupted by measurement noise: $\tilde{x} = x + \Delta$ where  the r.v. $ \Delta$ represents the noise and assumed to follow  the $\mathcal{N}(0, \delta^2)$ distribution. Moreover, measurement noises at different sensors and slots are  i.i.d..

\subsection{Problem Formulation}
The conventional \emph{deterministic} transmission design that first quantizes the sensing value and then transmits the output bits is unsuitable for a backscatter sensor without an analog-to-digital converter and under the constraint of on/off keying. To overcome the limitations, the proposed BackSense design framework relies on a randomized design for sensor transmission  and matching statistical-inference algorithms for the reader to recover the transmitted sensing values. Two corresponding design problems are formulated as follows. 

\subsubsection{Randomized Transmission  Problem for Backscatter Sensors}\label{randomized transmission problem}
Consider an arbitrary symbol duration. Let $S_n$ with $S_n \in \{0, \bar{\rho}\}$ denote the symbol transmitted by sensor $n$. The symbols from $N$ sensors are grouped as a vector $\bs = [S_1, S_2, \cdots, S_N]^T$. The design objective is to randomize sensor transmissions such that the sensing value vector $\bx$ is encoded into 
a sequence of realizations of the random vector $\{\bs^{(i)}\}$, whose distribution function, $p(\bs^{(i)})$,  is parameterized by the sensing values in $\bx$.  Designing the mentioned random-encoding scheme is challenging. First, it should be implementable under the constraint of on/off keying and using the simple backscatter hardware architecture. Next, the scheme should yield a distribution function $p(\bs^{(i)})$ that allows tractable design of statistical-inference algorithms for the reader. 

\subsubsection{Statistical Inference Problem for Reader} \label{inference problem}
The algorithms for statistical inference at the reader are designed targeting randomized transmissions by backscatter sensors from solving  the preceding design problem.  Furthermore, under the constraints of ALOHA-type random access and no CSI, the algorithms should be capable of scaling  up the inference accuracy with growing correlation between sensing values and the number of received observations. To this end, we design the inference algorithms based on the ML or MAP criterion. Let $\by^{(i)}$ denote the symbol vector received by the multi-antenna reader in the slot $i$. The set of data received over  $L$ slots  is represented by $\cal{D}$. Moreover, let  $\boldsymbol \Phi = \{\bx, \sigma_h^2, \sigma_w^2, \delta^2\}$ denote  the set of all the unknown parameters including the desired sensing value $\bx$. Then the ML problem formulation is given as follows: 
\begin{align}({\bf ML})\qquad 
\max_{\boldsymbol \Phi} \log p({\cal D}| \boldsymbol \Phi) \iff \max_{\boldsymbol \Phi} \sum_{i=1}^L \log p(\by^{(i)}|\boldsymbol \Phi). \label{ML formulation}
\end{align}
If the prior distribution of $\boldsymbol \Phi$ is available, the MAP  algorithms can be obtained using  the following MAP-to-ML conversion: 
\begin{align}({\bf MAP})\qquad
\max_{\boldsymbol \Phi} \log p(\bx|{\cal D}) &\iff \max_{\boldsymbol \Phi} \log \frac{p({\cal D}|\boldsymbol \Phi) p(\boldsymbol \Phi)}{p(\cal D)}\notag\\ 
&\iff \max_{\boldsymbol \Phi} \sum_{i=1}^L \log p(\by^{(i)}|\boldsymbol \Phi) + \log  p(\boldsymbol \Phi).\label{MAP formulation}
\end{align}
The challenge  for designing the inference algorithms based on the ML or MAP criteria arises from the existence of latent variables. In statistics, a \emph{latent variable} is defined as a variable not directly observed but has a direct or indirect impact on the observed variables. Denote $\bz$ as the vector that collects all the involved latent variables, which capture the randomized effect of the measurement noise, random transmitted symbols at  sensors, and channel noise and coefficients as specified in the sequel. 
Note that, via the latent variables, the design of random-encoding scheme in the preceding problem formulation affects the design of inference algorithms. 
The difficulty of solving \eqref{ML formulation} or \eqref{MAP formulation} lies in the computation of the likelihood function therein which requires marginalization over the latent variables: $p(\by| \bx) = \int p(\by,\bz|{\bx}) {\text d}\bz$. Essentially, the marginalization leads to undesired integrals inside the logarithm operation before the likelihood function, making the direct optimization on \eqref{ML formulation} or \eqref{MAP formulation} intractable. Furthermore, one can observe that given a large set of latent variables (corresponding to a high-dimension space for $\bz$), such marginalization involving many nested integrals can be computationally demanding. We tackle the challenge by applying theories of EM and variational inference in the algorithmic design.

\section{BackSense Design: Randomized Sensor Transmission}\label{Section:BackscatterSensingDesign}
In this section, the design of random-encoding scheme is presented that solves the design problem formulated in Section \ref{randomized transmission problem}. Its implementation on the BackSense architecture is illustrated in Fig.~\ref{Fig:Architecture}.  Then the distributions of the randomized sensor signals are analyzed. The results facilitate  the design of reader inference algorithms in the following two sections. 

\begin{figure}[t]
\centering
\includegraphics[width=17cm]{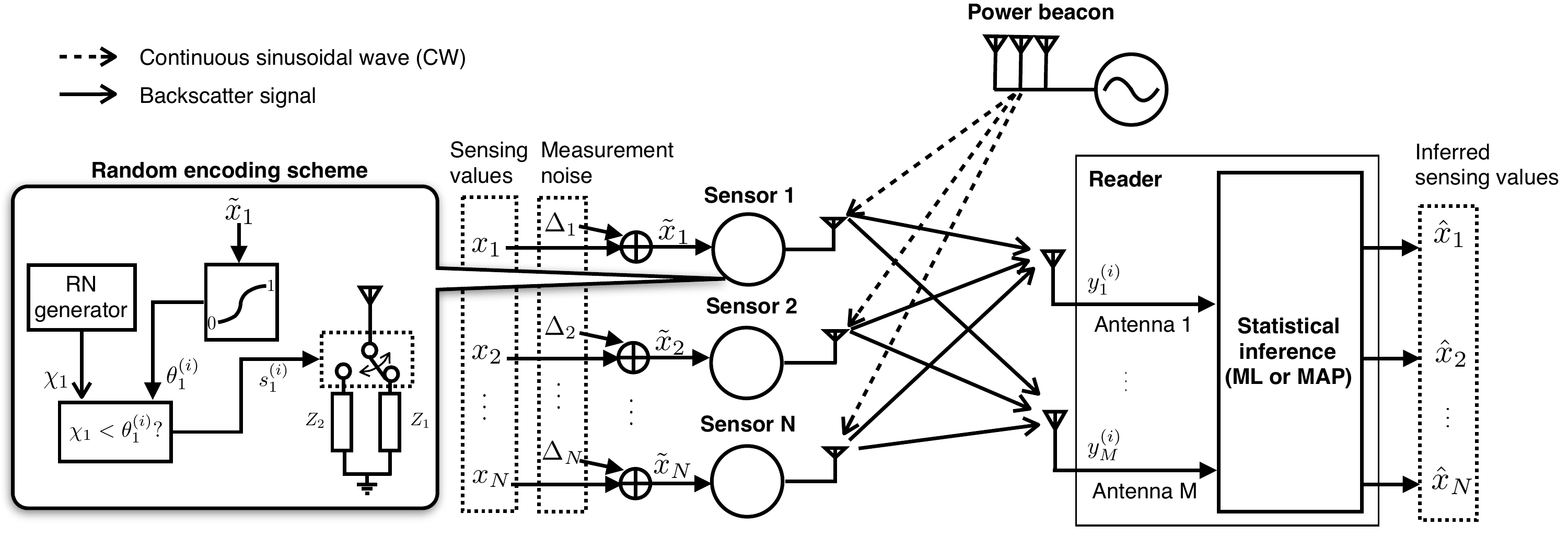}
\caption{The design of  BackSense system.}
\vspace{-6mm}
\label{Fig:Architecture}
\end{figure}

\subsection{random-encoding scheme}\label{Architecture}
As shown in Fig. \ref{Fig:Architecture}, the random-encoding scheme consists of the following two key steps. 

\vskip 10pt
\noindent
\textbf{Step 1) Range normalization}. Distributed as a continuous variable in real domain, a measured sensing value has an infinite range. The goal of the current step is to map it to a new variable with a unit range so as to facilitate random encoding in Step 2. Consider the sensor measurement  $\tilde{x}$ of a sensing value $x$. Let $\theta$ be a variable within the range $[0, 1]$. Then $\tilde{x}$ can be one-to-one mapped to $\theta$ using the well-known  sigmoid function (a function characterized by a S-shaped curve), denoted as $\mathcal{F}$: 
\begin{align}\label{mapping}
\theta=\mathcal{F}(\tilde{x})=\frac{1}{1+\exp\l(-\frac{\tilde{x}-\mu}{\sigma}\r)}, 
\end{align}
where $\mu$ and $\sigma$ are mean and variance of  $x$, respectively. This gives the name of $\theta$ as the \emph{mapped sensing value}. It is  assumed that the parameters $\mu$ and $\sigma$ are known by the sensor and reader via estimation using historical sensing data. Consider the sensing values and their measurements  for all $N$ sensors, represented by the vectors  $\mathbf{x}$ and $\mathbf{\tilde{x}}$, respectively, with $\mathbf{\tilde{x}}=\mathbf{x}+\boldsymbol\Delta$. The vector $\boldsymbol\Delta$ groups i.i.d. measurement-noise r.v. The element-wise mapping using \eqref{mapping} is represented by $\mathcal{F}(\mathbf{\tilde{x}})$,  where the mapping of the $n$-th element is specified by the corresponding parameters $\mu_n$ and $\sigma_n$. The result is denoted as  $\boldsymbol{\theta}$, namely  $\boldsymbol{\theta} = \mathcal{F}(\mathbf{\tilde{x}})$. Since the mapping $\mathcal{F}$  is one-to-one, the mapped sensing values $\boldsymbol{\theta}$ are used in place of $\mathbf{\tilde{x}}$ in the subsequent analysis and design.  

\noindent
\textbf{Step 2) Encoding by comparison with a uniform distributed r.v.} Consider an arbitrary sensor. As shown in Fig. \ref{Fig:Architecture}, the mapped sensing value $\theta$ is compared with a r.v., denoted as $\chi$,  uniformly distributed in the same range of $[0, 1]$. The r.v. is generated independently for different symbol durations.  Then, the comparison yields a random sequence of binary bits for turning on/off backscattering of the sensor (see Fig. \ref{Fig:Architecture}). The r.v.  $\chi$ can be generated locally using a simple circuit (see e.g., \cite{Che2008, Balachandran2008}). Let $S$ denote an arbitrary  symbol transmitted by the sensor using on/off keying. Then $S = \bar{\rho}$ if $\chi  < \theta$ or otherwise $S = 0$. Thereby, the distribution of $S$ is parameterized by the mapped sensing value $\theta$ as follows: 
\begin{equation}\label{Eq:SymDist}
S = \l\{\begin{aligned} 
&\bar{\rho}, && \text{w.p}\  \theta \\
&0, && \text{w.p}\ 1 - \theta. 
\end{aligned}
\r.
\end{equation}

\begin{remark}[How to decode the sensing value?] \emph{Given random encoding, the sensing value can be decoded at the reader by inferring the statistics of many observed symbols transmitted by sensors as illustrated by the following toy example case where all $N$ sensors with identical and fixed channel gains: $h_n^{(i)} = h$ for all $1\leq n\leq N$, $1\leq i \leq L$ and have uniform mapped sensing value $\theta$ to deliver. As a result, the reader receives a set of $L$ symbols transmitted by sensors, $\{\sum_{n=1}^N h_n^{(i)}S_n^{(i)} + w^{(i)}\}$, where $\{w^{(i)}\}$ represent the set of i.i.d. samples of the channel-noise process. If $L$ is sufficiently large and applying the law of large numbers,   then the mapped sensing value can be recovered as 
\[
\theta =\lim_{L\rightarrow \infty } \frac{1}{\bar{\rho}h N L}\sum_{i=1}^L \sum_{n=1}^N h_n^{(i)}S_n^{(i)} + w^{(i)}. 
\]
The measured sensing value is then obtained as $\tilde{x} = \mathcal{F}^{-1}(\theta)$. 
The above illustration shows how the sensing value can be decoded in a simplified scenario and the performance of sensing-data uploading improves with the growth of observation duration. However, the target scenario is more complex involving practical factors including  transmissions over multi-antenna channels, unknown and fluctuating channel states, spatial variation of sensing values, and finite numbers of observed symbols. This is the reason why the powerful statistical inference is needed for recovering sensing values (see Sections \ref{Section:Withoutnoise} and \ref{Section:Withnoise}). 
}
\end{remark}
 
 \subsection{Distributions of Randomized Transmissions}\label{Section:RandDist}
To facilitate statistical inference  at  the reader,  we derive  in this subsection the distributions of some r.v. arising from randomized transmissions using the proposed random-encoding scheme.

\subsubsection{Distribution of mapped  sensing values} 
First, consider the case without measurement noise. For the current case, the mapping between $\boldsymbol \theta$ and $\bx$ is one-to-one and thus  $\boldsymbol \theta$ can be considered as an equivalence of  $\bx$. Using  the mapping  in \eqref{mapping}, the prior distribution of  $\boldsymbol \theta$, denoted as $p(\boldsymbol \theta)$,  can be derived based on that of $\bx$, i.e., $p(\bx)$ given in \eqref{PDF of x}, as presented below.
\begin{lemma}\label{prop:1}
For the case of heterogeneous sensing values without measurement noise, the distribution of $\boldsymbol \theta$ is specified by
\begin{align}\label{Prior distribution of bf theta}
p(\boldsymbol \theta) = \l(\prod_{n=1}^N \frac{\sigma_n}{\theta_n-\theta_n^2}\r)
 \frac{1}{(2\pi)^{\frac{N}{2}} |\boldsymbol \Sigma_\bx|^{\frac{1}{2}}}\exp\l( -\frac{1}{2} \br(\boldsymbol \theta)^T \boldsymbol \Sigma_\bx^{-1} \br(\boldsymbol \theta)\r),
\end{align}
where $\br(\boldsymbol \theta)$ is an element-wise function of $\boldsymbol \theta$ given by 
\begin{align}
\br(\boldsymbol \theta) = \l[-\sigma_1 \log \l(\frac{1}{\theta_1} - 1\r), \cdots,  -\sigma_N \log \l(\frac{1}{\theta_N} - 1\r)\r]^T. 
\end{align}
\begin{proof}
See Appendix \ref{appendix:prop:1}.
\end{proof}
\end{lemma}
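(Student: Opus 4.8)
The plan is to obtain $p(\boldsymbol\theta)$ from $p(\bx)$ by a multivariate change of variables. Since the sigmoid $\mathcal{F}$ in \eqref{mapping} is applied coordinatewise and is, in each coordinate, a strictly increasing bijection from $\mathbb{R}$ onto $(0,1)$, the map $\boldsymbol\theta=\mathcal{F}(\bx)$ is a diffeomorphism from $\mathbb{R}^N$ onto $(0,1)^N$. Hence the standard density-transformation formula applies:
\[
p(\boldsymbol\theta) = p\big(\bx(\boldsymbol\theta)\big)\,\Big|\det\tfrac{\partial \bx}{\partial \boldsymbol\theta}\Big|,
\]
where $\bx(\boldsymbol\theta)=\mathcal{F}^{-1}(\boldsymbol\theta)$ and the second factor is the absolute Jacobian determinant of the inverse map.

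First I would invert the sigmoid. Solving \eqref{mapping} for $x_n$ (with $\tilde x_n=x_n$ in the noiseless case) gives $x_n=\mu_n-\sigma_n\log\!\big(\tfrac{1}{\theta_n}-1\big)$, so that $x_n-\mu_n=r_n(\boldsymbol\theta)$, i.e.\ $\bx-\boldsymbol\mu_\bx=\br(\boldsymbol\theta)$, once we identify the per-coordinate sigmoid parameter $\mu_n$ with the $n$-th entry of $\boldsymbol\mu_\bx$. Substituting this into the exponent of \eqref{PDF of x} turns the quadratic form $(\bx-\boldsymbol\mu_\bx)^T\boldsymbol\Sigma_\bx^{-1}(\bx-\boldsymbol\mu_\bx)$ directly into $\br(\boldsymbol\theta)^T\boldsymbol\Sigma_\bx^{-1}\br(\boldsymbol\theta)$, reproducing the exponential and the Gaussian-normalization constant $(2\pi)^{-N/2}|\boldsymbol\Sigma_\bx|^{-1/2}$ in \eqref{Prior distribution of bf theta}.

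Next I would evaluate the Jacobian. Because $\mathcal{F}$ acts coordinatewise, the matrix $\partial\bx/\partial\boldsymbol\theta$ is diagonal and its determinant is the product of the scalar derivatives. Differentiating $x_n=\mu_n-\sigma_n\log(1/\theta_n-1)$ yields $\tfrac{dx_n}{d\theta_n}=\tfrac{\sigma_n}{\theta_n-\theta_n^2}$; since $\theta_n\in(0,1)$ and $\sigma_n>0$ this quantity is positive, so the absolute value is redundant and $\big|\det\partial\bx/\partial\boldsymbol\theta\big|=\prod_{n=1}^N\tfrac{\sigma_n}{\theta_n-\theta_n^2}$. Multiplying this prefactor by the transformed Gaussian density gives exactly \eqref{Prior distribution of bf theta}.

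I expect no genuine obstacle here, as the argument is a routine change of variables. The only points requiring care are the algebraic inversion of the sigmoid, keeping track of signs so that the logarithmic term matches $\br(\boldsymbol\theta)$, and the bookkeeping of the diagonal Jacobian determinant. A minor conceptual check is confirming that the scalar parameters $(\mu_n,\sigma_n)$ used in the coordinatewise mapping are consistent with the mean vector $\boldsymbol\mu_\bx$ (and enter the covariance only through the quadratic form via $\boldsymbol\Sigma_\bx$), which is what lets the two expressions line up cleanly.
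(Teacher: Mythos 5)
Your proposal is correct and matches the paper's argument in substance: the paper derives the same result by writing the CDF of $\boldsymbol\theta$ in terms of the CDF of $\bx$ at the inverse-sigmoid points and then differentiating, which is just an unpacked version of the change-of-variables formula you invoke directly, with the identical inverse map $x_n=\mu_n-\sigma_n\log(1/\theta_n-1)$ and the identical diagonal Jacobian factor $\prod_{n}\sigma_n/(\theta_n-\theta_n^2)$.
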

For the simplified case of uniform sensing values (all elements of $\bx$ are equal), the expression given in \eqref{PDF of x} no longer holds as the covariance matrix $\boldsymbol \Sigma_\bx$ becomes singular. Alternatively, note that $\bx$ is specified by the scalar uniform value $x$ only and whose distribution can be modelled by a univariate Gaussian $x \sim {\cal N}(\mu_0,\sigma_0^2)$, where $\mu_0$ and $\sigma_0^2$ denote the mean and variance respectively. Thus the PDF of $x$ is then given by
$p(x) = \frac{1}{\sqrt{2\pi \sigma_0^2}}\exp\l( -\frac{(x - \mu_0)^2}{2\sigma_0^2}\r).$

Accordingly, following the similar procedure as for deriving Lemma \ref{prop:1}, the distribution of $\theta = {1}/\({1 + \exp\l(-\frac{x - \mu_0}{\sigma_0}\r)}\)$ can be derived as follows: 
\begin{align}\label{prior for theta}
p(\theta) = \frac{1}{(\theta-\theta^2)\sqrt{2\pi}}
 \exp\l( -\frac{\l(\log \l( \frac{1}{\theta} -1 \r)\r)^2}{2} \r). 
\end{align}
 
Next, consider the case with measurement noise. For this case, the mapped sensing-value vector  
$\boldsymbol{\theta}$ is a function of both the sensing values $\bx$ and measurement noise 
$\boldsymbol{\Delta}$. The randomized effect of the measurement noise on the mapping between  $\boldsymbol{\theta}$ and $\bx$ is captured by the conditional distribution of $\boldsymbol{\theta}$ (a latent variable) given $\bx$ (a parameter) which is required in the subsequent reader inference. The result is derived using a similar method as Lemma~\ref{prop:1} and given as follows. 

\begin{lemma}\label{prop:8}
For the case of heterogeneous sensing values with measurement noise, the conditional distribution function  of $\boldsymbol\theta$ given $\bx$ is given as:
\begin{align}\label{Conditional density of bf theta}
p(\boldsymbol \theta | \bx) = \l(\prod_{n=1}^N \frac{\sigma_n}{\theta_n-\theta_n^2}\r)
 \frac{1}{(2\pi)^{\frac{N}{2}} \delta^N}\exp\l( -\frac{\sum_{n=1}^N (-\sigma_n \log \l(\frac{1}{\theta_n} -1\r) + \mu_n - x_n)^2}{2\delta^2} \r).
\end{align}
\end{lemma}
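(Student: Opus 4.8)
The plan is to reduce Lemma~\ref{prop:8} to a componentwise change of variables, mirroring the argument behind Lemma~\ref{prop:1} but with the prior on $\bx$ replaced by the measurement model. The starting observation is that the measurement noises $\{\Delta_n\}$ are i.i.d.\ across sensors and the sigmoid map $\mathcal{F}$ in \eqref{mapping} acts element-wise, so $\theta_n$ depends only on $\tilde{x}_n = x_n + \Delta_n$. Consequently the conditional law factorizes,
\begin{align}
p(\boldsymbol\theta | \bx) = \prod_{n=1}^N p(\theta_n | x_n),
\end{align}
and it suffices to compute each scalar density $p(\theta_n | x_n)$. Conditioned on $x_n$, the noisy measurement is Gaussian, $\tilde{x}_n \sim \mathcal{N}(x_n,\delta^2)$, with density $\frac{1}{\sqrt{2\pi}\,\delta}\exp(-(\tilde{x}_n-x_n)^2/(2\delta^2))$.

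Next I would apply the univariate transformation formula to the strictly increasing map $\theta_n=\mathcal{F}(\tilde{x}_n)$. Two ingredients are needed. First, inverting the sigmoid gives $\tilde{x}_n = \mathcal{F}^{-1}(\theta_n) = \mu_n - \sigma_n\log(\tfrac{1}{\theta_n}-1)$, so that the deviation in the exponent becomes $\tilde{x}_n - x_n = -\sigma_n\log(\tfrac{1}{\theta_n}-1)+\mu_n-x_n$, which is precisely the term appearing in \eqref{Conditional density of bf theta}. Second, differentiating yields the Jacobian factor $\bigl|\tfrac{d\tilde{x}_n}{d\theta_n}\bigr| = \tfrac{\sigma_n}{\theta_n-\theta_n^2}$ (positive on $(0,1)$, so the absolute value is immaterial). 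Substituting the inverse into the Gaussian density and multiplying by this Jacobian produces
\begin{align}
p(\theta_n | x_n) = \frac{\sigma_n}{\theta_n-\theta_n^2}\,\frac{1}{\sqrt{2\pi}\,\delta}\exp\!\l(-\frac{\l(-\sigma_n\log(\tfrac{1}{\theta_n}-1)+\mu_n-x_n\r)^2}{2\delta^2}\r),
\end{align}
and taking the product over $n$ collects the Jacobians into $\prod_n \sigma_n/(\theta_n-\theta_n^2)$, the normalizers into $(2\pi)^{-N/2}\delta^{-N}$, and the exponents into a single sum, giving exactly \eqref{Conditional density of bf theta}.

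There is no deep obstacle here; the computation is a routine monotone change of variables. The only point requiring care --- and the single structural difference from Lemma~\ref{prop:1} --- is recognizing that conditioning on $\bx$ (rather than marginalizing over the prior) makes $\tilde{\bx}$ a Gaussian centered at $\bx$ with \emph{independent} coordinates and diagonal covariance $\delta^2\bI$. This independence is what turns the general quadratic form $\br(\boldsymbol\theta)^T\boldsymbol\Sigma_\bx^{-1}\br(\boldsymbol\theta)$ of Lemma~\ref{prop:1} into the plain sum of squares in the exponent of \eqref{Conditional density of bf theta}, and it is the reason the measurement-noise variance enters as the scalar $\delta^2$ rather than through a covariance matrix. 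Beyond that, the main thing to keep straight is the Jacobian bookkeeping, which is identical to that already verified in the proof of Lemma~\ref{prop:1}.
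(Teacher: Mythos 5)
Your proposal is correct and takes essentially the same approach as the paper: the paper states that Lemma \ref{prop:8} is obtained by the same change-of-variables method as Lemma \ref{prop:1}, only with the conditional Gaussian density of $\tilde{\bx}$ given $\bx$ (mean $\bx$, covariance $\delta^2 \bI$) replacing the prior of $\bx$, which is exactly your computation with the identical Jacobian $\prod_{n=1}^N \sigma_n/(\theta_n-\theta_n^2)$. Your coordinatewise factorization before applying the univariate transformation formula is merely a harmless reorganization enabled by the diagonal covariance.
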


\subsubsection{Distribution of the number of active sensors} A latent variable that affects the distribution of received signal at the reader is the number of active (on) sensors, denoted as $T^{(i)}$ for the $i$-th slot (see Fig. \ref{Fig:Probabilistic model without noise} and \ref{Fig:Probabilistic model with noise}). Given on/off keying, $T^{(i)}$ is the sum over a sequence of Bernoulli r.v. that represent the $i$-th symbols transmitted by all sensors: $T^{(i)} = \sum_{n=1}^N S^{(i)}_n$ where $\{S^{(i)}_n\}$ follow the distribution in  \eqref{Eq:SymDist}. Consider the case where the mapped sensing values differ over sensors, denoted as $\theta_1, \theta_2, \cdots, \theta_N$. The r.v. is a Poisson-Binomial r.v. that describes the numbers of successes in $N$ independent Bernoulli trails with different individual success probabilities, denoted by ${\cal PB}(N, \theta_1, \theta_2, \cdots, \theta_N)$. The corresponding \emph{probability mass function} (PMF) is  given as \cite{fernandez2010closed}:
\begin{align}\label{Poisson Binomial}
p(T^{(i)} = m | \boldsymbol \theta) = \frac{1}{N+1}\sum_{\ell=0}^N c^{-\ell m} \prod_{n=1}^N[1+(c^\ell - 1)\theta_n], \qquad  m = 0,1,\cdots, N,
\end{align}
where $c = \exp \left( \frac{2\pi j}{N+1} \right)$ with $j = \sqrt {-1}$. For the special case of uniform mapped sensing values denoted as  $\theta$, the distribution of $T^{(i)}$ reduces to the  Binomial distribution: 
\begin{align}\label{Binomial}
p(T^{(i)} = m | \theta) = {N \choose m}\theta^{m}(1-\theta)^{N-m}, \qquad  m = 0,1,\cdots, N.
\end{align}

\section{BackSense Design: Reader Inference with Ideal Measurements}\label{Section:Withoutnoise}
This section addresses the design problem of reader inference formulated in Section \ref{inference problem}. To this end, statistical inference algorithms are designed based on the EM framework, which enable the reader to recover the sensing values form the received signals. Ideal measurements are assumed and the effect of measurement noise on the design is addressed in the next section. We start with a brief introduction to the basic principle of the EM framework. Then a Bayesian network is constructed for specifying  the statistical dependencies between the observations and unknowns (including the latent variables and model parameters). Based on the Bayesian network, EM algorithms targeting BackSense are designed for both the criteria of ML and MAP. 

\subsection{Principle of the EM Framework}\label{EM framework}
Due to the existence of latent variables, directly solving the potentially non-convex ML and  MAP problems formulated in Section \ref{inference problem} is intractable. Alternatively, the EM framework tempts to find local-optimal  solutions by an iterative procedure, involving iterations between two main steps, i.e., the \emph{Expectation step} (E-step) and the \emph{Maximization step} (M-step) (see e.g., \cite{gupta2011theory}). In the E-step, the expectation of the \emph{complete-data log likelihood} is used  as a surrogate of the required \emph{incomplete-data log likelihood} as defined later. Then in the  M-step, the surrogate is  maximized instead of the actual incomplete-data log likelihood. The use of the surrogate makes finding the solution more tractable without compromizing its local-optimality. For exposition, the  details are discussed in the sequel based on the ML criterion while the same principle also applies  to the MAP criterion. 

Assume that we have  $L$ i.i.d. observations. We denote all of the observed variables as $\bY = [\by^{(1)},\by^{(2)},\cdots,\by^{(L)}]$,  all of the latent variables as $\bZ = [\bz^{(1)},\bz^{(2)},\cdots,\bz^{(L)}]$, and all of the parameters as $\bf \Phi$.  By introducing an auxiliary distribution $q(\bZ)$ defined over the latent variables $\bZ$, the \emph{incomplete-data log likelihood} can be decomposed by
\begin{align}\label{decomposition}
\log p(\bY|{\bf \Phi}) = {\cal L}(q,{\bf \Phi}) + \text{KL}(q||p),
\end{align}
where ${\cal L}(q,{\bf \Phi}) = {\int} q(\bZ) \log \left\{ \frac{p(\bY,\bZ| {\bf \Phi})}{q(\bZ)}\right\} {\text d}{\bZ}$ is the expectation of the complete-data log likelihood and the $\text{KL}(q||p) = - \int q(\bZ) \log \left\{ \frac{p(\bZ|\bY,{\bf \Phi})}{q(\bZ)} \right\} {\text d}{\bZ}$ is known as \emph{Kullback-Leibler} (KL) divergence. Note that the KL divergence measures  the ``similarity'' between the two  distributions of $q$ and $p$. Hence the measure    is a non-negative function attaining  the  minimum when $q$ and $p$ are identically distributed. Therefore, it follows from \eqref{decomposition} that ${\cal L}(q,{\bf \Phi})$ lower bounds  $\log p(\bY|{\bf \Phi})$. One can see that the bound reaches equality, namely $\text{KL}(q||p) = 0$, by letting $q(\bZ) = p(\bZ|\bY,{\bf \Phi})$. This   essentially gives the E-step. The resultant tight lower bound can then serve as a surrogate of $\log p(\bY|{\bf \Phi})$ that is  optimized to provide updated parameters $\bf \Phi$, constituting the M-step. As a result, the maximization of \eqref{decomposition} can be solved by repeating the following EM cycle, where the subscript $(t)$ is used to denote the iteration index:   
\begin{align}
{\bf E\!-\!step}:\qquad 
&q^{(t+1)}(\bZ) \gets \arg \max_q {\cal L}(q,{\bf \Phi}^{(t)})\label{Estep update}\\
\implies &q^{(t+1)}(\bZ) = p(\bZ|\bY,{\bf \Phi}^{(t)}). \label{Estep update2}\\
{\bf M\!-\!step}:\qquad  &{\bf \Phi}^{(t+1)} \gets \arg \max_{\bf \Phi} {\cal L}(q^{(t+1)},{\bf \Phi})\label{Mstep update}\\
\implies &{\bf \Phi}^{(t+1)} \gets \arg \max_{\bf \Phi} \int p(\bZ|\bY,{\bf \Phi}^{(t)})\log p(\bY,\bZ|{\bf \Phi}^{(t)}) {\text d}\bZ. \label{Mstep update2}
\end{align}
The above EM cycle starts from some initial values for the parameters ${\bf \Phi}$ and keep repeating until convergence. The convergence of the EM framework to at least a local optimum is guaranteed since both the E-step \eqref{Estep update} and M-step \eqref{Mstep update} lead to a non-decreasing objective ${\cal L}(q,{\bf \Phi})$ at each EM cycle \cite{gupta2011theory}.

\subsection{Construction of Bayesian Network}

Designing the BackSense inference algorithms begins with constructing a Bayesian network that specifies the set of parameters and latent variables affecting the received signal as well as their interdependence characterized by their conditional distributions. The set of parameters are unknown to the readers and thus the target of inference. The parameter set  $\bf \Phi$ comprises the mapped sensing values $\boldsymbol{\theta}$, channel variance $\sigma_h^2$ and channel-noise variance $\sigma_w^2$: ${\bf \Phi} = \{{\boldsymbol \theta}, \sigma_h^2, \sigma_w^2\}$. For the current case without measurement noise, there exists only one type of latent variables, namely the set of randomized symbols transmitted by sensors, that affect the observed variable, namely the received symbols over multiple antennas. 

Consider the $i$-th slot. To complete the construction of the Bayesian network, the conditional distribution of the received symbol vector $\by^{(i)}$ given the transmitted symbol vector $\bs^{(i)}$ is derived as follows. Note that the elements of the additive noise and the channel matrix in \eqref{Eq:Signalmodel} are complex Gaussians. It follows from \eqref{Eq:Signalmodel} that  given $\bs^{(i)}$, $\by^{(i)}$ is a sum of complex Gaussian vectors that also follows the complex Gaussian distribution. Specifically, conditioned on $\bs^{(i)}$, $\by^{(i)}  \sim {\cal CN}(0,\sigma_h^2 \sum_{n=1}^N S^{(i)}_n + \sigma_w^2)$. One can see that given on/off keying for sensor transmissions, the sum $\sigma_h^2 \sum_{n=1}^N S^{(i)}_n$ is equal to $\bar{\rho} T^{(i)}$ where $T^{(i)}$ is the number of active tags and $\bar{\rho}$ is the reflection coefficient for an active sensor. Without loss of generality, we set $\bar \rho = 1$ in the subsequent analysis for ease of notation. Mathematically, 
\begin{align}\label{Eq:ConditionalPDF}
p(\mathbf{y}^{(i)}\mid \bs^{(i)}) = p(\mathbf{y}^{(i)}\mid T^{(i)})
=\frac{1}{\left [2\pi (\sigma_h^2T^{(i)} + \sigma_w^2)\right]^M} \exp\l(-\frac{{\| \mathbf{y}^{(i)}\|_2^2}}{2(\sigma_h^2T^{(i)}+\sigma_w^2)}\r).
\end{align}
This suggests that the latent variable $\bs^{(i)}$ can be replaced with $T^{(i)}$ without changing the distribution of $\by^{(i)}$. The distribution of $T^{(i)}$ is derived earlier as given in \eqref{Poisson Binomial} and \eqref{Binomial}. Note that the variable replacement reduces the dimensions of latent space from $N$ to just $1$, reducing the complexity of inference.   Based on the above discussion, the Bayesian network corresponding to the $i$-th received symbol is illustrated in Fig.~\ref{Fig:Probabilistic model without noise}. 

\begin{figure}[tt]
\centering
\includegraphics[width=10cm]{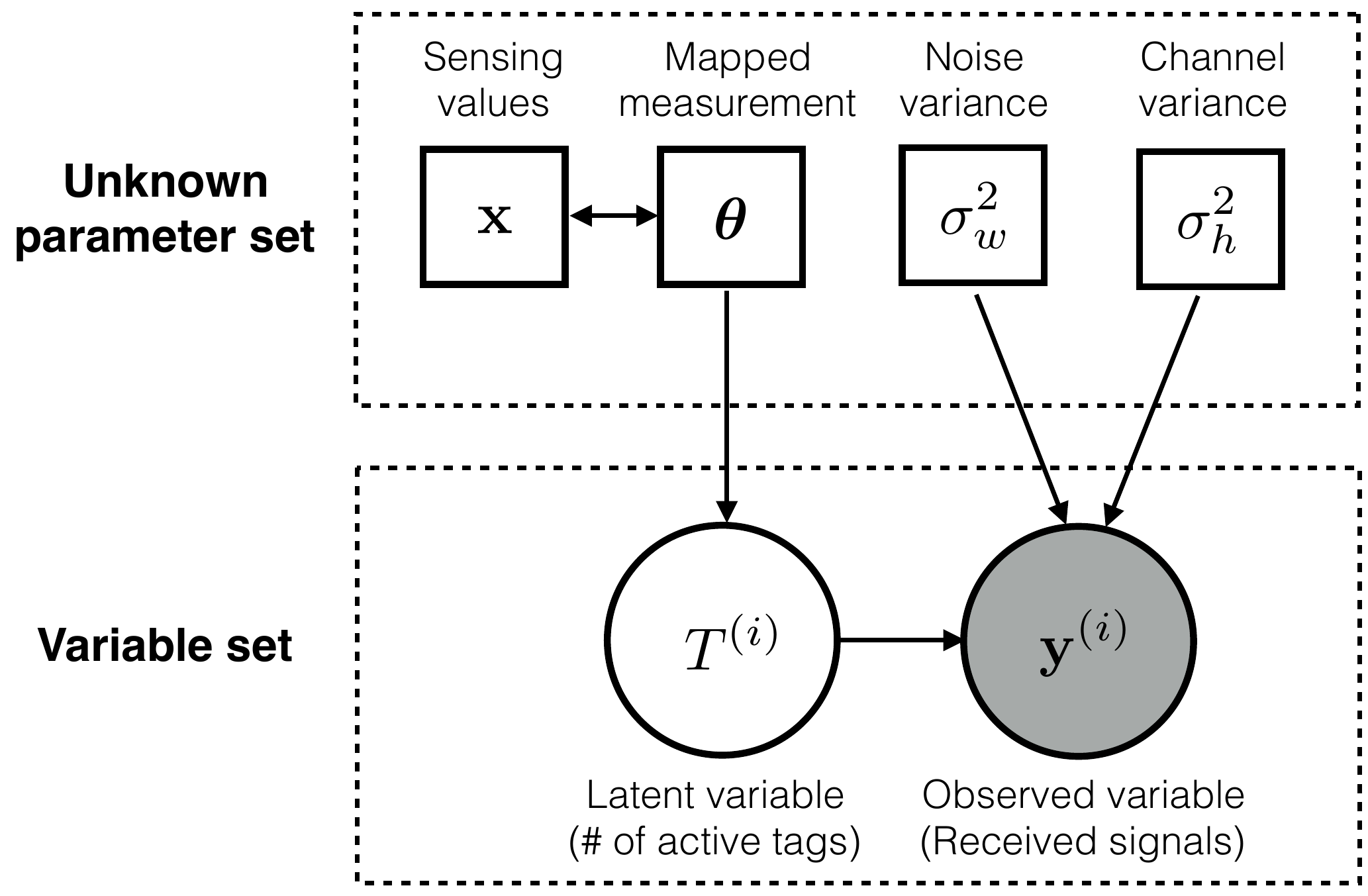}
\caption{Bayesian network for BackSense inference without measurement noise.}
\label{Fig:Probabilistic model without noise}
\vspace{-6mm}
\end{figure}

\subsection{Reader ML Inference with Uniform Sensing Values}
 In this subsection, we consider the simple case of uniform sensing values $\theta$ and the ML criterion for designing the reader-inference algorithm. The design is subsequently extended to the case of heterogeneous  sensing values as well as the MAP criterion. Based on the EM framework introduced in Section \ref{EM framework}, the algorithm for inferring $\theta$   is designed for BackSense by driving the corresponding E-step and M-step shortly. The resultant procedure is summarized in Algorithm \ref{algorithm:1}. Given the inferred sensing value $\hat{\theta}$ and the mapping in \eqref{mapping}, the inferred sensing value is then obtained as
\begin{align}\label{inverse mapping}
\hat x = \sigma_0 \log \(\frac{1}{\hat \theta} - 1\) + \mu_0.
\end{align}

\vspace{5pt}
\noindent\underline{a) Derivation of  E-step} \newline
Consider the Bayesian network in Fig. \ref{Fig:Probabilistic model without noise}. Given the  parameter set  ${\bf \Phi}_{\sf uni} = \{\theta, \sigma_h^2, \sigma_w^2\}$, the required posterior distribution of the latent variable $T^{(i)}$ can be computed via Bayes' Law as
\begin{align}
q(T^{(i)} = m) = p(T^{(i)} = m \mid \by^{(i)},{\bf \Phi}_{\sf uni}) = \frac{f_m}{\sum_n f_n},\label{Estep update3}
\end{align}
where $f_m$ is defined as 
\begin{align}\label{definition of fm}
f_m = \frac{1}{\left [2\pi (\sigma_h^2 m + \sigma_w^2)\right]^M} \exp\l(-\frac{{\| {\by^{(i)}}\|_2^2}}{2(\sigma_h^2 m+\sigma_w^2)}\r) {N \choose m}\theta^{m}(1-\theta)^{N-m}, \;\; m = 0,1,\cdots,N. 
\end{align} 
This completes the derivation of the E-step as indicated in \eqref{Estep update2}. For ease of notation, let $q^{(i)}_T(m)$ denote $q(T^{(i)} = m)$ in the rest of the paper. 

 \begin{algorithm}[t!]
\textbf{Initialization}:

Initialize the model parameters $\theta$, $\sigma_h^2$, $\sigma_w^2$ properly.

\textbf{Iteration}:

1) \textbf{E-step}: Update the posterior distribution of the latent variable,  $\{q^{(i)}_T(m)\}$,  according to \eqref{Estep update3}.

2) \textbf{M-step}: Update the parameter $\theta$ and $\sigma_h^2$, $\sigma_w^2$ by substituting the latest values of $\{q^{(i)}_T(m)\}$ into  \eqref{update for theta} and \eqref{update for sigma_h and sigma_w}, respectively.

\textbf{Until Convergence}.
\caption{EM algorithm for ML reader inference}\label{algorithm:1}
\end{algorithm}

\vspace{5pt}
\noindent\underline{b) Derivation of M-step} \newline
According to the M-step in  \eqref{Mstep update2}, the parameter set  ${\bf \Phi}_{\sf uni}$ should be optimized by solving 
\begin{align}\label{Mstep uniform ML case}
 {\bf \Phi}_{\sf uni}^* = \arg \max_{{\bf \Phi}_{\sf uni}} \sum_{i=1}^L \sum_{m=0}^N q^{(i)}_T(m) \log p(\by^{(i)},T^{(i)} = m \mid {{\bf \Phi}_{\sf uni}}),
\end{align}
where the joint distribution over the observed and latent variables is given by
\begin{align}\label{Joint distribution}
p(\by^{(i)},T^{(i)} = m \mid {{\bf \Phi}_{\sf uni}}) = f_m, \qquad m = 0,1,\cdots, N,
\end{align}
where $f_m$ is defined in \eqref{definition of fm}. Recall that the parameter set ${\bf \Phi}_{\sf uni} = \{\theta, \sigma_h^2, \sigma_w^2\}$. A close observation of  \eqref{Mstep uniform ML case} and \eqref{Joint distribution} reveals that $\theta$ and $\{\sigma_w^2, \sigma_h^2\}$ are decoupled in the objective function in \eqref{Mstep uniform ML case} and thereby they can be optimized  separately. 

Firstly, we derive the updating formula for  the parameter $\theta$ by considering those terms in \eqref{Mstep uniform ML case} that are related to $\theta$ only, the corresponding optimization problem is given as 
\begin{align}\label{Optimization for updating theta}
\theta^* = \arg \max_{\theta \in (0,1)} \sum_{i=1}^L \sum_{m=0}^N q^{(i)}_T(m) [m \log \theta + (N - m) \log (1-\theta)],
\end{align}

\begin{subtheorem}{thm}
\begin{thm}\label{prop:3}
The value of  $\theta$ that solves the optimization problem in \eqref{Optimization for updating theta} is given by
\begin{align}\label{update for theta}
\theta^* = \frac{1}{LN}{\sum_{i = 1}^L \sum_{m = 1}^N q^{(i)}_T(m) m}. 
\end{align}
\end{thm}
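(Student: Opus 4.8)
The plan is to treat the optimization in \eqref{Optimization for updating theta} as an unconstrained maximization of a smooth, strictly concave function of the single scalar $\theta$ on the open interval $(0,1)$, and to locate its unique stationary point via a first-order condition. Writing the objective as $g(\theta) = \sum_{i=1}^L \sum_{m=0}^N q^{(i)}_T(m)\,[m \log \theta + (N-m)\log(1-\theta)]$, I would first differentiate with respect to $\theta$ to obtain
\begin{align}
g'(\theta) = \frac{1}{\theta}\sum_{i=1}^L \sum_{m=0}^N q^{(i)}_T(m)\, m \;-\; \frac{1}{1-\theta}\sum_{i=1}^L \sum_{m=0}^N q^{(i)}_T(m)\,(N-m). \nonumber
\end{align}

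The key simplification comes from the fact that, for each slot $i$, the posterior weights $\{q^{(i)}_T(m)\}_{m=0}^N$ form a genuine probability distribution over $T^{(i)}$ and therefore satisfy $\sum_{m=0}^N q^{(i)}_T(m) = 1$; this is immediate from their definition through Bayes' Law in \eqref{Estep update3}. Using this normalization, the second inner sum collapses to $\sum_{i=1}^L \sum_{m=0}^N q^{(i)}_T(m)(N-m) = LN - A$, where $A := \sum_{i=1}^L \sum_{m=0}^N q^{(i)}_T(m)\, m$ denotes the total posterior-weighted count of active sensors. Setting $g'(\theta)=0$ then reduces to the single linear equation $A/\theta = (LN-A)/(1-\theta)$, whose solution is $\theta^* = A/(LN)$. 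Since the $m=0$ term contributes nothing to $A$, the summation may start at $m=1$, yielding exactly \eqref{update for theta}.

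Finally, I would confirm that this stationary point is the global maximizer rather than a spurious critical point or boundary artifact. This follows by inspecting the second derivative $g''(\theta) = -\theta^{-2}\sum_{i,m} q^{(i)}_T(m)\, m - (1-\theta)^{-2}\sum_{i,m} q^{(i)}_T(m)\,(N-m)$, which is strictly negative throughout $(0,1)$ because every weight $q^{(i)}_T(m)$ is non-negative and the coefficients $m$ and $N-m$ are non-negative (and not all zero in a non-degenerate instance). Hence $g$ is strictly concave, its interior stationary point is unique, and it is the global maximum. There is essentially no serious obstacle in this argument: it is the standard maximum-likelihood calculation for a Bernoulli/Binomial success probability, and the only mild bookkeeping subtlety is the use of $\sum_m q^{(i)}_T(m)=1$ to rewrite the $(N-m)$ contributions in terms of $A$.
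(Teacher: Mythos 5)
Your proof is correct and follows essentially the same route as the paper's: differentiate the objective, establish strict concavity via the negative second derivative, and solve the first-order condition, with your use of the normalization $\sum_{m} q^{(i)}_T(m)=1$ being a minor streamlining of the same calculation. The only cosmetic difference is that the paper separately verifies $\theta^* \in (0,1)$ by interpreting $\frac{1}{N}\sum_{m} q^{(i)}_T(m)\,m$ as a normalized posterior expectation of $T^{(i)}$, a point your argument covers implicitly since $\theta^* = A/(LN)$ with $0 \le A \le LN$.
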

\begin{proof}
See Appendix \ref{appendix:prop:3}.
\end{proof}

The above result yields  the formula for updating the  parameter $\theta$ in the M-step.

Next, to derive the updating formula for the parameters $\sigma_w^2$ and $\sigma_h^2$ , group  the relevant   terms in the objective function in  \eqref{Mstep uniform ML case}.  The corresponding optimization problem is 
\begin{align}\label{Optimization for updating sigma}
\max_{\sigma_h^2, \sigma_w^2} \sum_{i=1}^L \sum_{m=0}^N q^{(i)}_T(m) \l[-M \log (\sigma_h^2 m+\sigma_w^2) -  \frac{\| {\by^{(i)}}\|_2^2}{2(\sigma_h^2 m+\sigma_w^2)} \r].
\end{align}
The problem is non-convex and thus it is difficult to find its solution in closed form. To overcome the difficulty, we develop a sub-optimal approach that relaxes the problem to a quasiconvex one and solving it provides closed-form updates for the target parameters $\sigma_h^2$ and $\sigma_w^2$. For this purpose, one can observe that the objective function depends on  $\sigma_h^2$ and $\sigma_w^2$ via a weighted sum, defined  as  $\Sigma_m = \sigma_h^2 m+\sigma_w^2$. By substituting $\Sigma_m$,  the optimization problem in \eqref{Optimization for updating sigma} is converted to the following quasiconvex problem: 
\begin{align}\label{Optimization for updating Sigma_t}
\max_{\{\Sigma_m\}} \sum_{i=1}^L \sum_{m=0}^N q^{(i)}_T(m) \l(-M \log (\Sigma_m) -  \frac{\| {\by^{(i)}}\|_2^2}{2\Sigma_m} \r).
\end{align}
\begin{thm}\label{prop:4}
The optimal  values of $\{\Sigma_m\}$ that solve the problem in \eqref{Optimization for updating Sigma_t} are given by
\begin{align}\label{update for Sigma_t}
\Sigma_m^* = \frac{\sum_{i = 1}^L  q^{(i)}_T(m) \| {\by^{(i)}}\|_2^2}{2M\sum_{i = 1}^L  q^{(i)}_T(m)}, \qquad m = 0,1,\cdots, N. 
\end{align}
\end{thm}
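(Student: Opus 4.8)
The plan is to exploit the separability that the substitution $\Sigma_m = \sigma_h^2 m + \sigma_w^2$ creates. After this change of variables the scalars $\Sigma_0,\Sigma_1,\dots,\Sigma_N$ appear in \eqref{Optimization for updating Sigma_t} as $N+1$ mutually independent positive variables, and the double sum contains no term coupling $\Sigma_m$ with $\Sigma_{m'}$ for $m\neq m'$. Consequently the maximization decouples completely over the index $m$, and it suffices to maximize, for each fixed $m$, the univariate function formed by the terms that carry that particular $\Sigma_m$.

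First I would collect those terms and introduce the shorthand $A_m=\sum_{i=1}^L q^{(i)}_T(m)$ and $B_m=\sum_{i=1}^L q^{(i)}_T(m)\|\by^{(i)}\|_2^2$, both nonnegative since the posterior weights $q^{(i)}_T(m)$ and the energies $\|\by^{(i)}\|_2^2$ are nonnegative. The per-index objective then reads
\begin{align}
g_m(\Sigma_m)=-M A_m \log \Sigma_m-\frac{B_m}{2\Sigma_m},\qquad \Sigma_m>0.
\end{align}
On the positive half-line both $-\log\Sigma_m$ and $-1/\Sigma_m$ are concave (the second derivative of the latter is $-2/\Sigma_m^3<0$), so $g_m$ is a nonnegative combination of concave functions and hence concave; any stationary point is therefore the unique global maximizer. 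Setting $g_m'(\Sigma_m)=-M A_m/\Sigma_m+B_m/(2\Sigma_m^2)=0$ and multiplying by $\Sigma_m^2$ yields the linear equation $M A_m \Sigma_m=B_m/2$, whose solution $\Sigma_m^{*}=B_m/(2M A_m)$ is exactly the expression claimed in \eqref{update for Sigma_t}.

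The computation is elementary, so the only point demanding care is to confirm that the first-order condition delivers a genuine global maximum rather than a spurious critical point. This is settled by the concavity argument above, reinforced by the boundary behaviour $g_m(\Sigma_m)\to-\infty$ as $\Sigma_m\to 0^{+}$ and as $\Sigma_m\to\infty$, which forces the maximizer to lie in the interior whenever $A_m>0$. I would note in passing that $A_m>0$ holds generically, since it can fail only in the degenerate case where every posterior weight for the value $m$ vanishes; such an index contributes nothing to the objective and may simply be excluded.
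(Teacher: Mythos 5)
Your computation and final formula are correct, and your overall route---decouple the objective over $m$, collect the weights $A_m=\sum_i q^{(i)}_T(m)$ and $B_m=\sum_i q^{(i)}_T(m)\|\by^{(i)}\|_2^2$, and solve the first-order condition---is essentially the paper's. The genuine flaw is your concavity claim: $-\log\Sigma_m$ is \emph{convex} on $\Sigma_m>0$ (its second derivative is $1/\Sigma_m^2>0$), so $g_m$ is not a nonnegative combination of concave functions, and it is in fact not concave. Explicitly, $g_m''(\Sigma_m)=\bigl(MA_m\Sigma_m-B_m\bigr)/\Sigma_m^3$, which is strictly positive for $\Sigma_m>B_m/(MA_m)$. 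The paper makes exactly this point: it notes that the objective fails to be concave and instead argues via quasi-concavity, observing that the first-order condition has the unique root $\Sigma_m^*=B_m/(2MA_m)$, that $g_m''(\Sigma_m^*)<0$ there, and that continuity and differentiability on $\Sigma_m>0$ then force this unique critical point to be the global maximizer. So your sentence ``any stationary point is therefore the unique global maximizer'' is asserted for the wrong reason as written.

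Fortunately your own ``reinforcement'' remark already contains a complete repair, so the error is local rather than structural: since $g_m$ is continuous on $(0,\infty)$ and $g_m(\Sigma_m)\to-\infty$ both as $\Sigma_m\to 0^+$ (the term $-B_m/(2\Sigma_m)$ dominates the $+\infty$ of $-MA_m\log\Sigma_m$, assuming $B_m>0$, which holds almost surely under Gaussian noise) and as $\Sigma_m\to\infty$, a global maximizer exists in the interior and must be a critical point; the stationarity equation $MA_m\Sigma_m=B_m/2$ is linear with a single root, so that root is the maximizer. Even more directly, $g_m'(\Sigma_m)=\bigl(B_m/2-MA_m\Sigma_m\bigr)/\Sigma_m^2$ is positive for $\Sigma_m<\Sigma_m^*$ and negative for $\Sigma_m>\Sigma_m^*$, i.e., $g_m$ is strictly unimodal---precisely the quasi-concavity the paper invokes. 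Replace the concavity sentence by either of these observations, keeping your careful handling of the degenerate indices with $A_m=0$ (which the paper omits), and the proof is complete. Incidentally, the paper's displayed first derivative in \eqref{First derivative of L Sigma_t} carries a sign typo (the term $-\|\by^{(i)}\|^2/(2\Sigma_m^2)$ should appear with a $+$ sign, else the derivative would have no root); your first-order condition is the correct one.
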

\begin{proof}
See Appendix \ref{appendix:prop:4}.
\end{proof}
Note that in general, it is infeasible to find the values of $\sigma_h^2$ and $\sigma_w^2$ that satisfy $(N+1)$ equations: $\Sigma_m = \sigma_h^2 m+\sigma_w^2$ with $m = 0, 1, \cdots, N$. Thus, we resort to finding approximate values, denoted as $(\sigma_w^{2*}, \sigma_h^{2*})$,  by \emph{linear regression}: 
\begin{align}
(\sigma_h^{2*}, \sigma_w^{2*}) &= \arg \min_{(\sigma_w^2, \sigma_h^2)} \sum_{m=0}^N \l(\sigma_w^2 + m \sigma_h^2 - \Sigma_m^*\r)^2\nn\\
&= \l(\frac{\sum_{m=0}^N (m-\frac{N}{2}) (\Sigma_m^* - \bar \Sigma)}{\sum_{m=0}^N (m-\frac{N}{2})^2},  \bar\Sigma - \frac{N}{2}\sigma_h^{2*}  \r) \label{update for sigma_h and sigma_w} 
\end{align}
where $\bar \Sigma$ denotes the mean of the optimal values of $\{\Sigma_m^*\}$ derived in \eqref{update for Sigma_t}.
The above result gives the formula for updating the  parameters $\sigma_h^2$ and $\sigma_w^2$  for the M-step. Combining this  result and that in Proposition~\ref{prop:3} completes the derivation of the M-step.
\end{subtheorem}

\subsection{Extension: Reader MAP Inference}
The EM algorithm based on the ML criterion as shown in Algorithm \ref{algorithm:1} is extended to the MAP criterion as follows. Comparing \eqref{ML formulation} and \eqref{MAP formulation}, one can see that the objective function for the MAP problem differs from the ML counterpart only by the extra term $\log  p(\boldsymbol \Phi)$,  corresponding to the ``log-prior" of the unknown parameters. Thus, deriving the MAP algorithm requires only modifying the M-step of the ML counterpart that updates the estimation of the parameters. Specifically, the M-step for the MAP algorithm can be written as 
\begin{align}\label{Mstep uniform MAP case}
\boldsymbol{\Phi}_{\sf uni}^* =  \arg \max_{{\bf \Phi}_{\sf uni}} \sum_{i=1}^L \sum_{m=0}^N q^{(i)}_T(m) \log p(\by^{(i)},T^{(i)} = m \mid {{\bf \Phi}_{\sf uni}}) + \log p({\bf \Phi}_{\sf uni}).
\end{align}
Recall that $\boldsymbol{\Phi}_{\sf uni} = \{\theta, \sigma_h^2, \sigma_w^2\}$. The prior distribution of $\theta$ is derived as shown in \eqref{prior for theta}. However, for those of $\sigma_h^2$ and $\sigma_w^2$, in practice, such information is difficult to obtain from historical inference results or estimate due to lack of training signals. A common technique to tackle the difficulty is to replace the required prior distributions with some constants, known as \emph{non-informative priors} (see e.g., \cite{murphy2012machine}). To some extent, the technique reduces the performance gain of MAP interference over ML inference. Using the technique and  prior distribution of $\theta$ in \eqref{prior for theta}, the optimization problem for updating $\theta$ in the M-step for the MAP inference is modified from the ML counterpart in \eqref{Optimization for updating theta} as 
\begin{align}\label{MAP Optimization for updating theta}
\max_{\theta \in (0,1)} \sum_{i=1}^L \sum_{m=0}^N q^{(i)}_T(m) [m \log \theta + (N - m) \log (1-\theta)] -\log \theta(1-\theta) - \frac{\l[\log\l(\frac{1}{\theta} - 1\r)\r]^2}{2}.
\end{align}
\begin{thm}\label{prop:5}
The optimal value of $\theta$ that solves the optimization problem in \eqref{MAP Optimization for updating theta} satisfies 
\begin{align}\label{MAP update for theta}
e^{(LN - 2)\theta^*} = a \l(\frac{1}{\theta^*} - 1\r),
\end{align}
where $a$ is a constant and given by $a = \exp\l(\sum_{i=1}^L \sum_{m=0}^N q^{(i)}_T(m) m -1\r)$. 
\end{thm}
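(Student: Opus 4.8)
The plan is to treat \eqref{MAP Optimization for updating theta} as an unconstrained maximization of a smooth function over the open interval $(0,1)$ and to characterize its optimizer through the first-order stationarity condition. First I would collapse the double summations using the fact that, for each $i$, the posterior weights $\{q^{(i)}_T(m)\}_{m=0}^N$ form a probability distribution, i.e. $\sum_{m=0}^N q^{(i)}_T(m) = 1$. Writing $S = \sum_{i=1}^L \sum_{m=0}^N q^{(i)}_T(m)\, m$, the coefficient of $\log\theta$ becomes $S$ while the coefficient of $\log(1-\theta)$ becomes $\sum_{i=1}^L\sum_{m=0}^N q^{(i)}_T(m)(N-m) = LN - S$. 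Absorbing the log-prior contribution $-\log\theta - \log(1-\theta)$ coming from $-\log[\theta(1-\theta)]$, the objective reduces to
\[
g(\theta) = (S-1)\log\theta + (LN-S-1)\log(1-\theta) - \tfrac{1}{2}\l[\log\l(\tfrac{1}{\theta}-1\r)\r]^2 .
\]

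Next I would introduce the shorthand $u = \log(1/\theta - 1) = \log(1-\theta) - \log\theta$, whose derivative is the compact expression $\mathrm{d}u/\mathrm{d}\theta = -1/[\theta(1-\theta)]$. Differentiating $g$ and setting $g'(\theta^*) = 0$ gives
\[
\frac{S-1}{\theta^*} - \frac{LN-S-1}{1-\theta^*} + \frac{u^*}{\theta^*(1-\theta^*)} = 0 ,
\]
where $u^* = \log(1/\theta^* - 1)$. Multiplying through by $\theta^*(1-\theta^*)$ clears all denominators and collapses the rational terms:
\[
(S-1)(1-\theta^*) - (LN-S-1)\theta^* + u^* = 0 \;\Longrightarrow\; u^* = (LN-2)\theta^* - (S-1) .
\]
Substituting back $u^* = \log(1/\theta^* - 1)$ and exponentiating yields $1/\theta^* - 1 = \exp\l((LN-2)\theta^* - (S-1)\r)$, which is exactly the claimed identity $e^{(LN-2)\theta^*} = a(1/\theta^* - 1)$ with $a = \exp(S-1) = \exp\l(\sum_{i=1}^L\sum_{m=0}^N q^{(i)}_T(m)\,m - 1\r)$.

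The one point requiring care is justifying that an interior maximizer exists and must satisfy this stationarity condition, since the quadratic log-prior term $-\tfrac12 u^2$ is not concave in $\theta$, so the objective need not be globally concave and a pure concavity argument is unavailable. I would settle this by examining the boundary behaviour: as $\theta\to 0^+$ or $\theta\to 1^-$, the term $-\tfrac12[\log(1/\theta-1)]^2$ tends to $-\infty$ and dominates the slowly-growing logarithmic terms, so $g(\theta)\to -\infty$ at both endpoints. Continuity on $(0,1)$ then guarantees that the supremum is attained at an interior point, which is necessarily stationary and therefore obeys \eqref{MAP update for theta}. I expect this existence/interiority argument, rather than the differentiation itself, to be the main obstacle; note also that \eqref{MAP update for theta} is transcendental in $\theta^*$, so the proposition can only furnish an implicit characterization, to be solved numerically within the M-step.
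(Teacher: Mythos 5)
Your derivation is correct and takes exactly the route the paper intends when it declares the proof ``straightforward'' and omits it: collapse the double sum via $\sum_{m} q^{(i)}_T(m)=1$, set the first derivative to zero, clear denominators with $\theta(1-\theta)$, and exponentiate, which reproduces \eqref{MAP update for theta} with $a=\exp(S-1)$. Your boundary argument that the objective tends to $-\infty$ as $\theta\to 0^{+}$ and $\theta\to 1^{-}$ (so an interior stationary maximizer exists despite the non-concave log-prior term) supplies the one justification the paper skips; you could strengthen it further by noting the stationary point is unique, since $e^{(LN-2)\theta}$ is increasing in $\theta$ while $a\left(\frac{1}{\theta}-1\right)$ decreases from $+\infty$ to $0$ on $(0,1)$, so the root of \eqref{MAP update for theta} is necessarily the global maximizer.
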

The proof is straightforward and omitted here for brevity. Though $\theta^*$ has no closed form, it can be computed by a simple numerical search. 
On the other hand, due to the use of non-informative priors, the procedure for updating the other parameters, i.e.,  $\sigma_h^2$ and $\sigma_w^2$, remains the same as the ML counterpart.

In summary, the iterative algorithm for MAP inference for  the current case can be modified from Algorithm \ref{algorithm:1} by replacing \eqref{update for theta} in Step 2) with solving the equation in \eqref{MAP update for theta}. 

\begin{remark}\emph{(ML versus  MAP)\label{ML vs MAP}
A comparison between  \eqref{Mstep uniform ML case} and \eqref{Mstep uniform MAP case} reveals that, for MAP inference, the reader's  knowledge of parameters' prior distributions  servers as an additional regularization term in the objective function to avoid overfitting to the potential outliers in the data set. This makes  the performance more robust even when there are errorneous data points in the data set or its size is not large enough. } 
\end{remark}

\subsection{Extension: Heterogeneous  Sensing Values}

In the preceding sub-sections, we designed inference algorithms for the simple case of uniform sensing values. They are extended in this sub-section to the general  case of heterogeneous sensing values. The corresponding parameter $\boldsymbol{\theta}$ is a vector comprising $N$ elements instead of being a scalar in the preceding case. Their joint distribution is derived earlier as shown in \eqref{Prior distribution of bf theta}. Given their relation shown in Fig. \ref{Fig:Probabilistic model without noise}, the change in $\boldsymbol{\theta}$ causes the number of active tags, $T^{(i)}$, to follow the  Poisson-Binomial distribution in \eqref{Poisson Binomial}  instead of the Binomial distribution in the case of uniform sensing values. The above changes in parametric distributions require the EM algorithms for reader inference to be modified as follows. 

\vspace{5pt}
\noindent\underline{a) Derivation of E-step} \newline
According to \eqref{Estep update2}, given initial values of the unknown parameters ${\bf \Phi}_{\sf cor} = \{\boldsymbol \theta, \sigma_h^2, \sigma_w^2\}$, the E-step calculates the posterior distribution of the latent variable by Bayes' law as follows. 
\begin{align}
q(T^{(i)} = m) &= p(T^{(i)} = m \mid \by^{(i)},{\bf \Phi}_{\sf cor})= \frac{g_m}{\sum_{n}g_n}, \label{Estep update correlated sensing}
\end{align} 
where $g_m$ with $m=0,1,\cdots, N$ is defined as 
\begin{align}
g_m &= \frac{1}{\left [2\pi (\sigma_h^2 m + \sigma_w^2)\right]^M} \exp\l(-\frac{{\| {\by^{(i)}}\|_2^2}}{2(\sigma_h^2 m+\sigma_w^2)}\r) \frac{1}{N+1}\sum\limits_{\ell=0}^N c^{-\ell m} \prod_{n=1}^N[1+(c^\ell - 1)\theta_n]. 
\end{align}

\vspace{5pt}
\noindent\underline{b) Derivation of M-step} \newline
The formulas for updating the parameters $\sigma_w^2$ and $\sigma_h^2$ remain the same as in \eqref{update for sigma_h and sigma_w}. Only that for updating $\boldsymbol{\theta}$ needs to be modified as follows. Similar to \eqref{Optimization for updating theta}, by considering only those terms related to $\boldsymbol \theta$ in the M-step optimization, the update of $\boldsymbol \theta$ requires to solve the following two optimization problems based on the ML and MAP criteria, respectively: 
\begin{align}
\text{(ML)}\qquad &\max_{\boldsymbol \theta} \tilde g(\boldsymbol{\theta}) \label{Optimization for updating bf theta:ML},\\
\text{(MAP)}\qquad &\max_{\boldsymbol \theta} \l[\tilde g(\boldsymbol{\theta}) - \sum_{n=1}^N \log (\theta_n - \theta_n^2) -\frac{1}{2} \br(\boldsymbol \theta)^T \boldsymbol \Sigma_\bx^{-1} \br(\boldsymbol \theta)\r],\label{Optimization for updating bf theta:MAP}
\end{align}
where $\tilde g(\boldsymbol{\theta})$ is defined as 
\begin{equation}
\tilde g(\boldsymbol{\theta}) = \sum_{i=1}^L \sum_{m=0}^N q^{(i)}_T(m) \log \left\{ \sum_{\ell=0}^N c^{-\ell m} \prod_{n=1}^N[1+(c^\ell - 1)\theta_n] \right\}.
\end{equation}
It can be observed from the above two optimization problems that the mapped sensing values $\{\theta_n\}$ are coupled with each other, making the problem non-convex and the direct solutions intractable. In the area of statistical inference, a common approach for tackling this challenge, known as the \emph{generalized EM} (GEM) framework (see e.g.,  \cite{neal1998view}), is to update the parameters in a tractable way to increase the M-step objective in \eqref{Optimization for updating bf theta:ML}  or \eqref{Optimization for updating bf theta:MAP}  (e.g., using gradient ascent) instead of exactly maximizing it. Using this technique, each complete EM cycle of the GEM algorithm is guaranteed to increase the value of the log likelihood objective until converging to a local optimum \cite{neal1998view}.
Specifically, the iterative equation for updating ${\boldsymbol \theta}$ can be derived as shown below:
\begin{align}\label{ML update for bf theta}
\boldsymbol \theta^{(t+1)} = \boldsymbol \theta^{(t)} + \alpha \nabla\mathcal{L}(\boldsymbol \theta^{(t)}),
\end{align}
where $\alpha$ is the step size and $\nabla\mathcal{L}(\boldsymbol \theta)$ denotes the derivative of the objective function in either   \eqref{Optimization for updating bf theta:ML} or \eqref{Optimization for updating bf theta:MAP}, representing the gradient direction, and whose the $n$-th element is given by: 
\begin{align}
\l[\nabla\mathcal{L}(\boldsymbol \theta)\r]_n(\text{ML}) &= \sum_{i=1}^L \sum_{m=0}^N q^{(i)}_T(m) \frac{ \sum_{\ell=0}^N c^{-\ell m} (c^\ell - 1)\prod_{n'\neq n}^N[1+(c^\ell - 1)\theta_{n'}]}{ \sum_{\ell=0}^N c^{-\ell m} \prod_{n=1}^N[1+(c^\ell - 1)\theta_n]}.\\
\l[\nabla\mathcal{L}(\boldsymbol \theta)\r]_n(\text{MAP})&= \l[\nabla_{\boldsymbol \theta}\tilde{\cal L}(\boldsymbol \theta)\r]_n(\text{ML})- \frac{1-2\theta_n}{\theta_n(1-\theta_n)} - \frac{\sigma_n }{(\theta_n - \theta_n^2)}\l[\boldsymbol \Sigma_\bx^{-1} \br(\boldsymbol \theta)\r]_n.
\end{align}

In summary, the iterative algorithm for reader inference in the current case can be modified from Algorithm \ref{algorithm:1} by simply replacing \eqref{Estep update3} in the E-step with \eqref{Estep update correlated sensing} and \eqref{update for theta} in M-step with \eqref{ML update for bf theta}.

\section{BackSense Design: Reader Inference with Noisy Measurements}\label{Section:Withnoise}
In the preceding section, the mapped sensing value vector  $\boldsymbol{\theta}$ is a parameter in the Bayesian network (see Fig. \ref{Fig:Probabilistic model without noise}) since it is one-to-one mapped from the sensing-value vector $\bx$ and remains fixed throughout the observation duration. In this section, in the presence of measurement noise, $\boldsymbol{\theta}$ is a random vector  conditioned on $\bx$ and varies from symbol-to-symbol, making it a latent variable in the Bayesian network. This causes  the dimensionality of the latent space increasing linearly with the number of sensors.  To overcome the resultant prohibitive complexity of statistical inference for a large number of sensors, we design practical approximate EM algorithms using the method of \emph{variational inference}. The details are presented targeting ML inference.  The extension to the MAP inference is straightforward and omitted for brevity. 

\subsection{Construction of Bayesian Network}
With the addition of measurement noise, the corresponding Bayesian network is modified from the noise-free counterpart (see Fig. \ref{Fig:Probabilistic model without noise}) as shown in Fig.~\ref{Fig:Probabilistic model with noise} and described as  follows. First of all, the mapped sensing vector $\boldsymbol \theta^{(i)}$ is changed from a parameter to a latent variable. Moreover, an extra parameter $\delta^2$, namely the measurement-noise variance, is included that affects the latent variable $\boldsymbol \theta^{(i)}$ together with the sensing-value vector $\bx$. The corresponding conditional distribution of $\boldsymbol \theta^{(i)}$ for given $\delta^2$ and $\bx$ is derived earlier as shown in  \eqref{Conditional density of bf theta}. On the other hand, the conditional distributions of the other latent variable $T^{(i)}$ and the observed variable  $\by^{(i)}$ are identical as their measurement-noise-free counterparts given in  \eqref{Poisson Binomial} and \eqref{Eq:ConditionalPDF}, respectively. 

\begin{figure}[tt]
\centering
\includegraphics[width=11cm]{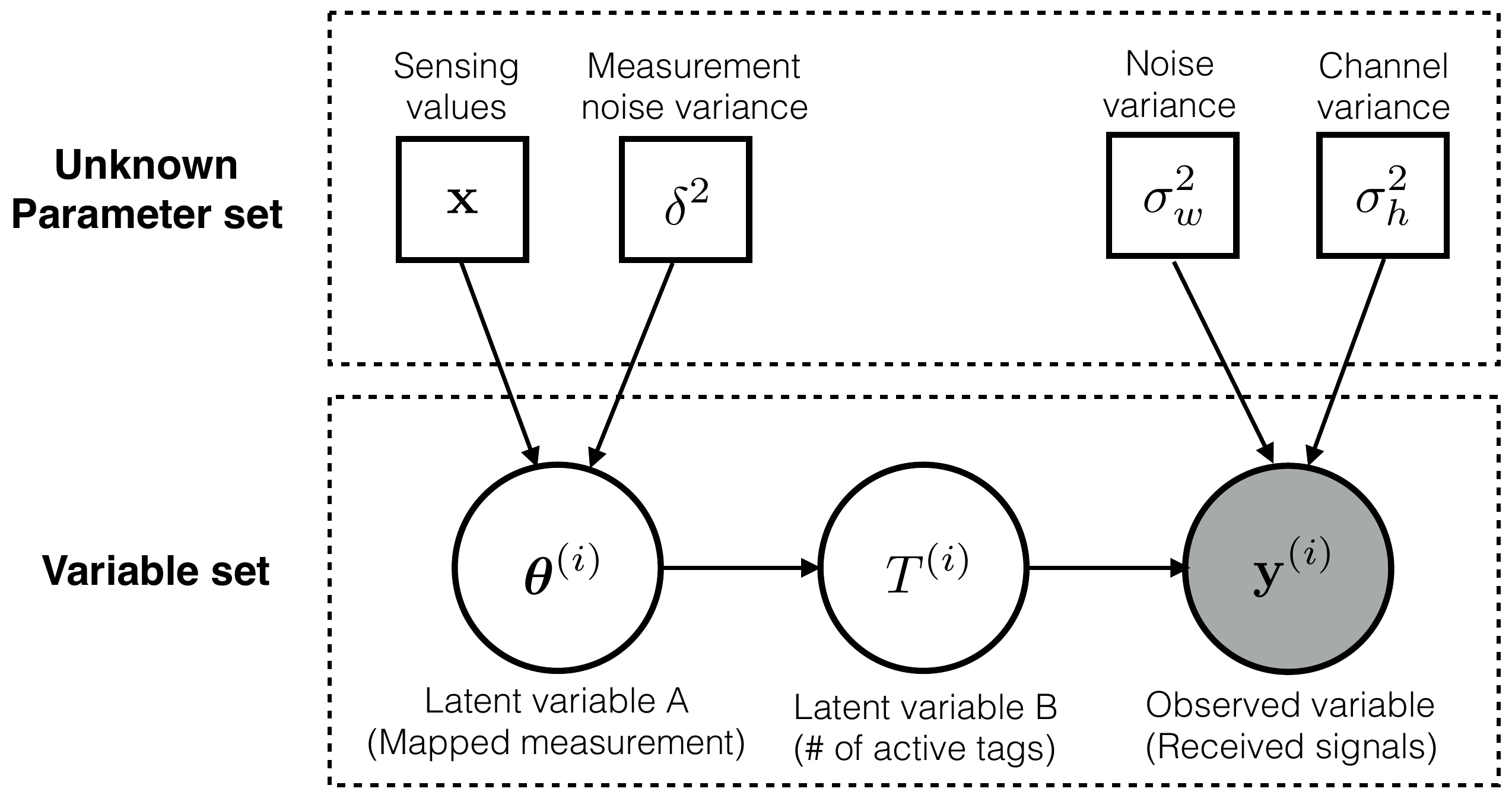}
\caption{Bayesian network for BacksSense with the measurement noise.}
\label{Fig:Probabilistic model with noise}
\vspace{-6mm}
\end{figure}

\subsection{Reader Inference Based on  Approximate EM Implementation}

\subsubsection{Principle of the Variational Inference}\label{Variational Inference} The method of variation inference is one low-complexity approximation of the EM framework. The key idea is to apply the \emph{mean field theory} \cite{kadanoff2009more} that constrains the variational
distribution $q(\bZ)$ in the EM framework [see \eqref{Estep update}: E-step and \eqref{Mstep update}: M-step] to belong to a family of distribution functions charaterized by the factorized form $q(\bZ) = \prod_k q_k(\bz_k)$, where $\bz_k^T$ is the $k$-th row of $\bZ$ and $q_k$ is the corresponding  factor function  \cite{beal2003variational}. When a Bayesian network can be decoupled into sub-networks or the complete data likelihood can be factorized into product form, the approximation can reduce integrals over a high-dimensional  latent space to multiple parallel integrals with low dimensionality and hence low complexity. By substituting the approximate form of $q(\bZ) = \prod_k q_k(\bz_k)$, the two steps of EM framework in \eqref{Estep update2} and \eqref{Mstep update2} can be modified and written  as follows \cite{beal2003variational}: 
\begin{align}
{\bf Variational\;\;E\!-\!step}:\quad 
 &q^{(t+1)}(\bZ)  = b\prod_\ell \exp\left(\int \log p\l(\bY,\bZ|{\bf \Phi}^{(t)}\r)\prod_{k\neq \ell}q_k(\bz_k) d \bz_k\right), \label{VI Estep update}\\
{\bf Variational\;\;M\!-\!step}:\quad   &{\bf \Phi}^{(t+1)} \gets \arg \max_{\bf \Phi} \int q^{(t+1)}(\bZ)\log p(\bY,\bZ|{\bf \Phi}^{(t)}) {\text d}\bZ, \label{VI Mstep update}
\end{align}
where $b$ is a normalization constant whose value can be computed by inspection. One can observe from \eqref{VI Estep update} and \eqref{VI Mstep update} that  the high-dimensional integrals therein can be reduced into low-dimensional ones   if the term $p\l(\bY,\bZ|{\bf \Phi}^{(t)}\r)$ can be decomposed as the product of factors that are functions of non-overlapping subsets of the rows of $\bZ$.  

\subsubsection{Inference Algorithm Derivation} Based on the Bayesian network in Fig. \ref{Fig:Probabilistic model with noise} and applying the mean field approximation, the variational distribution in the EM framework is constrained to take the fully factorized form $q(\bz^{(i)}) = q(T^{(i)}) \prod_n q(\theta_n^{(i)})$. Then according to \eqref{VI Estep update} we have 
\begin{align}
 q^*(T^{(i)} ) &= b_0\exp\l\{ \mathbb{E}_{\boldsymbol \theta} \l[\log p(\by^{(i)}, T^{(i)} , \boldsymbol \theta^{(i)} \mid {\bf \Phi}_{\sf noi}) \r]\r\}, \label{variational distribution T}\\
q^*(\theta_n^{(i)}) &= b_n\exp\l\{ \mathbb{E}_{T,\boldsymbol \theta \backslash \theta_n} \l[\log p(\by^{(i)}, T^{(i)} , \boldsymbol \theta^{(i)} \mid {\bf \Phi}_{\sf noi}) \r]\r\},\qquad \forall n, \label{variational distribution theta}
\end{align}
where $\{b_n\}$ are normalization constants. The explicit expressions  of the approximate marginal distributions of individual latent variables shown in \eqref{variational distribution T} and \eqref{variational distribution theta} can be derived based on the following procedure. Considering \eqref{variational distribution T}, apply the conditional distributions derived earlier as shown in \eqref{Conditional density of bf theta}, \eqref{Poisson Binomial}, \eqref{Eq:ConditionalPDF}  and  together with the chain rule,  followed by separating the terms depending on $T$ in \eqref{variational distribution T} from those  on $\theta_n$. Similar procedure can be applied to \eqref{variational distribution theta}. As the result,  we can obtain the said marginal distributions as shown below, giving the variational E-step. 
\begin{subtheorem}{thm}
\begin{thm} \label{Prop:VE-step}The variational E-step involves calculation of the approximate marginal distributions of latent variables given as: 
\begin{equation}
 q^*(T^{(i)} = m) = \frac{g_a(m)}{\sum_k g_a(k)},\qquad q^*(\theta_n^{(i)} = z_n) = \frac{g_b(z_n)}{\int_0^1g_b(z_n)dz_n}, \;\; \forall n, \label{variational distributions}
\end{equation}
where $g_a(m)$ and $g_b(z_n)$ are given as 
\begin{align}
g_a(m) &= \frac{\mathcal{F}_1(m)}{\sigma^2_w + m \sigma_h^2} \exp \l\{ - \frac{\|\by^{(i)}\|_2^2}{2(\sigma^2_w + m \sigma_h^2)}\r\} ,\nn\\
g_b(z_n)&= \frac{\mathcal{F}_2(z_n)}{z_n (1 - z_n)} \exp\l\{- \frac{1}{2\delta^2}\(-\sigma_n \log\(\frac{1}{z_n} - 1\) + \mu_n - x_n\)^2\r\},  \nn
\end{align}
wherein we define two functions  for ease of notation: 
\begin{align}
\mathcal{F}_1(m) &= \exp\l\{\mathbb{E}_{\boldsymbol \theta} \l(\log \(\sum\nolimits_{\ell=0}^N c^{-\ell m} \prod\nolimits_{k=1}^N[1+(c^\ell - 1)\theta_k]\)  \r)\r\}. \label{Eq:F:1}\\
 \mathcal{F}_2(z_n) &= \left. \exp\l\{\mathbb{E}_{T,\boldsymbol \theta \backslash \theta_n} \l(\log \(\sum\nolimits_{\ell=0}^N c^{-\ell T} \prod\nolimits_{k=1}^N[1+(c^\ell - 1)\theta_k]\)\r)\r\} \right|_{\theta_n = z_n}. \label{Eq:F:2}
\end{align}
\end{thm}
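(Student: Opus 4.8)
The plan is to derive both marginals by direct substitution into the mean-field update rules \eqref{variational distribution T} and \eqref{variational distribution theta}; the only real work is to track which factors survive each expectation and which collapse into the normalization constants $b_0$ and $b_n$. First I would write the joint law of the Bayesian network in Fig.~\ref{Fig:Probabilistic model with noise} in its chain-rule factored form
\[
p(\by^{(i)}, T^{(i)}, \boldsymbol\theta^{(i)} \mid {\bf \Phi}_{\sf noi}) = p(\by^{(i)} \mid T^{(i)})\, p(T^{(i)} \mid \boldsymbol\theta^{(i)})\, p(\boldsymbol\theta^{(i)} \mid \bx),
\]
with the three conditionals read off from \eqref{Eq:ConditionalPDF}, \eqref{Poisson Binomial} and \eqref{Conditional density of bf theta}, respectively. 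Taking logarithms turns the joint law into a sum of three terms, so the expectation inside the exponent of \eqref{variational distribution T} and \eqref{variational distribution theta} can be applied to each term separately.

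For $q^*(T^{(i)})$ I would evaluate $\mathbb{E}_{\boldsymbol\theta}$ of this sum. The prior factor $p(\boldsymbol\theta^{(i)}\mid\bx)$ carries no $T$ and contributes only a constant absorbed into $b_0$; the Gaussian likelihood $p(\by^{(i)}\mid T^{(i)}=m)$ from \eqref{Eq:ConditionalPDF} carries no $\boldsymbol\theta$, so the expectation passes through it and it survives verbatim up to the $m$-independent $2\pi$ factor, supplying the rational-times-exponential dependence on $\sigma_w^2+m\sigma_h^2$ appearing in $g_a(m)$; finally $\mathbb{E}_{\boldsymbol\theta}[\log p(T^{(i)}=m\mid\boldsymbol\theta^{(i)})]$ applied to the Poisson--Binomial PMF \eqref{Poisson Binomial}, after re-exponentiating and absorbing the constant $1/(N+1)$, is exactly $\mathcal{F}_1(m)$ as defined in \eqref{Eq:F:1}. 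Collecting the surviving factors gives $q^*(T^{(i)}=m)\propto g_a(m)$, and normalizing over $m=0,\dots,N$ yields the first expression in \eqref{variational distributions}.

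The update for $q^*(\theta_n^{(i)})$ follows the same template under $\mathbb{E}_{T,\boldsymbol\theta\backslash\theta_n}$, and the key step I would exploit is that the conditional prior \eqref{Conditional density of bf theta} \emph{factorizes across sensors}. Consequently, integrating over $\boldsymbol\theta\backslash\theta_n$ leaves only its $n$-th factor as a function of $\theta_n$, namely $\frac{1}{z_n(1-z_n)}\exp(-\frac{(-\sigma_n\log(1/z_n-1)+\mu_n-x_n)^2}{2\delta^2})$ after setting $\theta_n=z_n$ and pushing $\sigma_n$ and $(2\pi)^{-N/2}\delta^{-N}$ into $b_n$. The likelihood $p(\by^{(i)}\mid T^{(i)})$ has no $\theta_n$ and contributes a constant, while $\mathbb{E}_{T,\boldsymbol\theta\backslash\theta_n}[\log p(T^{(i)}\mid\boldsymbol\theta^{(i)})]$ reproduces $\mathcal{F}_2(z_n)$ of \eqref{Eq:F:2}. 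Assembling these factors and normalizing by $\int_0^1 g_b(z_n)\,dz_n$ gives the second expression in \eqref{variational distributions}.

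The hard part is not algebraic but structural. The terms $\mathbb{E}_{\boldsymbol\theta}[\log(\sum_\ell\cdots)]$ and $\mathbb{E}_{T,\boldsymbol\theta\backslash\theta_n}[\log(\sum_\ell\cdots)]$ involve the logarithm of the Poisson--Binomial sum and admit no closed form, so the proof cannot simplify them; it must instead recognize that they are precisely the definitions of $\mathcal{F}_1$ and $\mathcal{F}_2$ and leave them packaged as such. A related subtlety worth flagging is that these two functions couple the two marginals---$\mathcal{F}_1(m)$ depends on the current $\{q(\theta_k)\}$ through $\mathbb{E}_{\boldsymbol\theta}$, and $\mathcal{F}_2(z_n)$ on $q(T)$ and $\{q(\theta_k)\}_{k\neq n}$---so \eqref{variational distributions} is genuinely a coupled fixed-point system; establishing the stated closed form, however, requires only the single-pass substitution above, with the mutual dependence resolved by the outer iteration of the variational E-step.
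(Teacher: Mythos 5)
Your proposal is correct and follows essentially the same route as the paper, which likewise obtains the proposition by writing the joint in the chain-rule form $p(\by^{(i)}\mid T^{(i)})\,p(T^{(i)}\mid \boldsymbol\theta^{(i)})\,p(\boldsymbol\theta^{(i)}\mid \bx)$ using \eqref{Eq:ConditionalPDF}, \eqref{Poisson Binomial} and \eqref{Conditional density of bf theta}, applying the mean-field expectations term by term, separating the factors depending on $T$ (resp.\ $\theta_n$) from those absorbed into the normalizers, and leaving the intractable log-Poisson--Binomial expectations packaged as $\mathcal{F}_1$ and $\mathcal{F}_2$. One minor remark: a verbatim substitution of \eqref{Eq:ConditionalPDF} yields $(\sigma_w^2+m\sigma_h^2)^{M}$ in the denominator of $g_a(m)$, consistent with \eqref{definition of fm}, so the first power displayed in the proposition is evidently a typo in the statement rather than a flaw in your derivation.
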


Next, we derive the variational M-step.  To this end, substitute the marginal distributions of latent variables in   \eqref{variational distributions} into \eqref{VI Mstep update}, and solve  the resultant optimization problems.  Thereby, we have  the following set of equations for parametric updating, constituting  the  variational M-step.  

\begin{thm}\label{Prop:VM-step} \emph{For the variational M-step, the parameters $\{\bx, \delta^2, \sigma^{2}_h, \sigma_w^{2}\}$ are updated as:}
 \begin{align}
 x_n^*&=\mu_n - \frac{1}{L}\sum_{i=1}^L \int q(\theta_n^{(i)})
\sigma_n \log\l( {1}/{\theta_n^{(i)}} - 1\r)\text{d} \theta_n^{(i)} \qquad n = 1,2,\cdots, N. \nn \\
 {\delta^2}^* &= \frac{1}{LN} \sum_{i=1}^L   \sum_{n=1}^N \int q(\theta_n^{(i)})(-\sigma_n \log (1/\theta_n^{(i)} -1) + \mu_n - x_n)^2 \text{d} \theta_n^{(i)}. \nn\\
\Sigma_m^* &= \frac{\sum_{i=1}^L q(T^{(i)} =m) \| {\by^{(i)}}\|_2^2}{2M\sum_{i=1}^L q(T^{(i)} =m)}, \qquad m = 0,1,\cdots, N \nn\\
(\sigma^{2*}_h, \sigma_w^{2*}) & = \l(\frac{\sum_{m=0}^N (m-\frac{N}{2}) (\Sigma_m^* - \bar \Sigma)}{\sum_{m=0}^N (m-\frac{N}{2})^2},  \bar\Sigma - \frac{N}{2}\sigma_h^{2*}  \r)  \nn
\end{align}
\end{thm}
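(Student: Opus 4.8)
The plan is to execute the variational M-step \eqref{VI Mstep update} by first writing the complete-data log likelihood for a single slot, exploiting the chain-rule factorization of the Bayesian network in Fig.~\ref{Fig:Probabilistic model with noise}, and then observing that the four parameters split into two decoupled groups. Along the network the joint factorizes as $p(\by^{(i)}, T^{(i)}, \boldsymbol\theta^{(i)} \mid {\bf \Phi}_{\sf noi}) = p(\by^{(i)}\mid T^{(i)})\, p(T^{(i)}\mid \boldsymbol\theta^{(i)})\, p(\boldsymbol\theta^{(i)}\mid \bx)$, whose three logarithmic factors are supplied respectively by \eqref{Eq:ConditionalPDF}, \eqref{Poisson Binomial} and \eqref{Conditional density of bf theta}. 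The crucial structural observation is that $\{\sigma_h^2,\sigma_w^2\}$ enter only through $\log p(\by^{(i)}\mid T^{(i)})$, the pair $\{\bx,\delta^2\}$ enters only through $\log p(\boldsymbol\theta^{(i)}\mid \bx)$, and the middle Poisson-Binomial factor $p(T^{(i)}\mid\boldsymbol\theta^{(i)})$ carries no parameter from ${\bf \Phi}_{\sf noi}$ at all. It is therefore constant under $\arg\max_{\bf \Phi}$, so after substituting the factorized variational distribution $q(\bz^{(i)}) = q(T^{(i)})\prod_n q(\theta_n^{(i)})$ the maximization in \eqref{VI Mstep update} separates into two independent subproblems.

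For $\bx$ and $\delta^2$ I would take the expectation of $\log p(\boldsymbol\theta^{(i)}\mid\bx)$ under $\prod_n q(\theta_n^{(i)})$ and sum over $i$. Since the Jacobian prefactor $\prod_n \sigma_n/(\theta_n-\theta_n^2)$ and the $(2\pi)^{-N/2}$ constant in \eqref{Conditional density of bf theta} do not depend on $\bx$ or $\delta^2$, the only surviving terms form $-NL\log\delta - \frac{1}{2\delta^2}\sum_{i,n}\mathbb{E}_{q}[(-\sigma_n\log(1/\theta_n^{(i)}-1)+\mu_n-x_n)^2]$, which is exactly the log-likelihood of a Gaussian with unknown mean-shift $x_n$ and variance $\delta^2$. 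Because this is quadratic in $x_n$, setting $\partial/\partial x_n = 0$ is linear and yields the stated update directly; substituting that optimizer back and then setting $\partial/\partial(\delta^2) = 0$ gives the empirical-variance form for ${\delta^2}^*$. Each expectation collapses to the one-dimensional integral $\int q(\theta_n^{(i)})(\cdot)\,\mathrm{d}\theta_n^{(i)}$ thanks to the mean-field factorization, matching the claimed expressions.

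For $\sigma_h^2$ and $\sigma_w^2$ I would observe that, after taking the expectation of $\log p(\by^{(i)}\mid T^{(i)})$ under $q(T^{(i)})$ and discarding the $-M\log(2\pi)$ constant, the resulting objective is term-for-term identical to \eqref{Optimization for updating sigma} from the noise-free analysis, with $q^{(i)}_T(m)$ replaced by $q(T^{(i)}=m)$. I would therefore reuse Proposition~\ref{prop:4} after the substitution $\Sigma_m = \sigma_h^2 m + \sigma_w^2$ to obtain $\Sigma_m^*$, and recover $(\sigma_h^{2*},\sigma_w^{2*})$ via the same linear regression as in \eqref{update for sigma_h and sigma_w}. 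The main obstacle is really just the decoupling argument of the first paragraph: once it is established that the parameter groups never co-occur in a single log-factor and that the Poisson-Binomial factor is parameter-free, the two subproblems are genuinely independent, and each reduces either to an elementary Gaussian mean/variance calculation or to a computation already performed in Section~\ref{Section:Withoutnoise}.
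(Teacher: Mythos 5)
Your proposal is correct and matches the paper's (sketched) derivation: the paper likewise obtains the proposition by substituting the factorized variational posteriors into \eqref{VI Mstep update}, exploiting exactly the decoupling you identify --- $\{\bx,\delta^2\}$ appearing only through $\log p(\boldsymbol\theta^{(i)}\mid\bx)$ in \eqref{Conditional density of bf theta} and $\{\sigma_h^2,\sigma_w^2\}$ only through \eqref{Eq:ConditionalPDF} --- so that the first pair reduces to a Gaussian mean/variance stationarity calculation and the second reuses Proposition~\ref{prop:4} with the linear regression of \eqref{update for sigma_h and sigma_w}. Your explicit observation that the Poisson-Binomial factor carries no parameters, justifying the separation, is a correct and welcome elaboration of a step the paper leaves implicit.
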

\end{subtheorem}
Finally, the main steps of the approximate EM algorithm is summarized in Algorithm \ref{algorithm:2}.
\begin{remark}\emph{(Complexity Reduction by Variational Inference) It can be observed from  Proposition~\ref{Prop:VM-step} that only one-dimension integrals are required  in the variational M-step. However, high dimension integrals are still needed in the variational E-step (see Proposition~\ref{Prop:VE-step}) for the reason that given the parameters, the joint distribution  function of observation and latent variables cannot be decomposed into the fully factorized form. Despite this, the method of variation inference still leads to substantial complexity reduction with respect to the exact EM inference where both the E-step and M-step involve $N$-dimension integrals over the full latent space.} 
\end{remark}

\begin{algorithm}[tt]
\textbf{Initialization}:

Initialize the model parameters $\bx$, $\delta^2$, $\sigma_h^2$, $\sigma_w^2$, and the marginal posteriors $q(T^{(i)})$ and $\{q(\theta_n^{(i)})\}$; 
\textbf{Iteration}:

1) \textbf{E-step}: Compute the approximate marginal posteriors of latent variables  iteratively  by cycling the variational distributions presented in \eqref{variational distributions} in Proposition~\ref{Prop:VE-step} for rounds until convergence; 

2) \textbf{M-step}: Update the parameters according to the formulas in Proposition~\ref{Prop:VM-step}; 

\textbf{Until Convergence}.

\caption{Summary of the approximate EM algorithm using variational inference}\label{algorithm:2}
\end{algorithm}

\section{Simulation Results}\label{Section:StatisticsInference}
For  simulation, the performance metric is  the inference  error, defined by $\frac{\|\hat \bx - \bx\|}{\|\bx\|}$.
The simulation parameters are set as follows unless specified otherwise. The average transmit \emph{signal-and-noise ratio} (SNR), defined as ${P_t}/{\sigma_w^2}$, is set to be $10$ dB.
The variances of the wireless channel and the additive Gaussian noise are $\sigma_h^2 = \sigma_w^2 = 1$.  For the prior distribution of $\bx$, the mean is $\mu_n = 25, \forall n$, and  the covariance matrix $\boldsymbol \Sigma_\bx$ can be decomposed as  $\boldsymbol \Sigma_\bx$ = $\bD^{1/2}\bC_\bx\bD^{1/2}$, where $\bD$ is a diagonal matrix with the diagonal elements collecting the variances of the elements of $\bx$, namely $[\bD]_{n,n} =  \sigma_n^2$, and we set $\sigma_n^2 = 1,\forall n$. Moreover, the correlation matrix $\bC_\bx$ is assumed to  have the \emph{linear exponent autoregressive structure} \cite{simpson2010linear}:
 \begin{align}
\bC_\bx= \left[ {\begin{array}{*{20}{c}}
1&\rho&{{\rho^2}}& \cdots &{{\rho^{N - 1}}}\\
\rho&1&\rho& \cdots &{\rho^{N - 2}}\\
 \vdots & \vdots & \vdots &{}& \vdots \\
\rho^{N-1}&\rho^{N-2}&\rho^{N-3}& \cdots &1
\end{array}} \right], \notag
\end{align}
in which the parameter $\rho \in [0,1)$ controls the level of  correlation between the adjacent sensor tags, the correlation between tags further away  is assumed to decrease with  the growth of their spatial separation following a geometric sequence $1,\rho,\rho^2,\cdots$.

\subsection{Effect of  Observation Duration}

Consider the scenario without the measurement noise with $N = 4$ and $\rho = 0.5$. Increasing the observation duration, namely  $L$ slots (or symbol durations), generates a larger data set and thereby improves the inference accuracy. The gain is evaluated  in Fig \ref{Fig:Effect of the size of data set}. It is observed that  the inference errors  for both the cases of ML and MAP inference are monotonic decreasing functions of  $L$. MAP inference outperforms  ML especially  in the regime with a relatively short observation duration (small $L$) and the difference  diminishes  as $L$  increases. In this  data-deficient regime, ML tends to  overfit the potential outliners arising from noise corruption. On the other hand,  overfitting is mitigated by MAP due to  a regularization term that attempts to align inference results  with the prior distribution (see Remark \ref{ML vs MAP}). The performance convergence between ML and MAP in the data-sufficient regime (large $L$) is due to the increasingly dominant  contribution from the likelihood function to the posterior objective as indicated in \eqref{MAP formulation}, which reduces that  of the regularization imposed by the prior distribution. Furthermore, the inference accuracy is also observed to increase with the number of reader's receive antenna  $M$. The multi-antenna gain  arises from  the diversity gain resulting from the increased observation dimension.  The same observations also hold for  the scenario with measurement noise.


\begin{figure*}[tt]
  \centering
  \subfigure[Effect of Observation Duration]{\label{Fig:Effect of the size of data set}\includegraphics[width=0.46\textwidth]{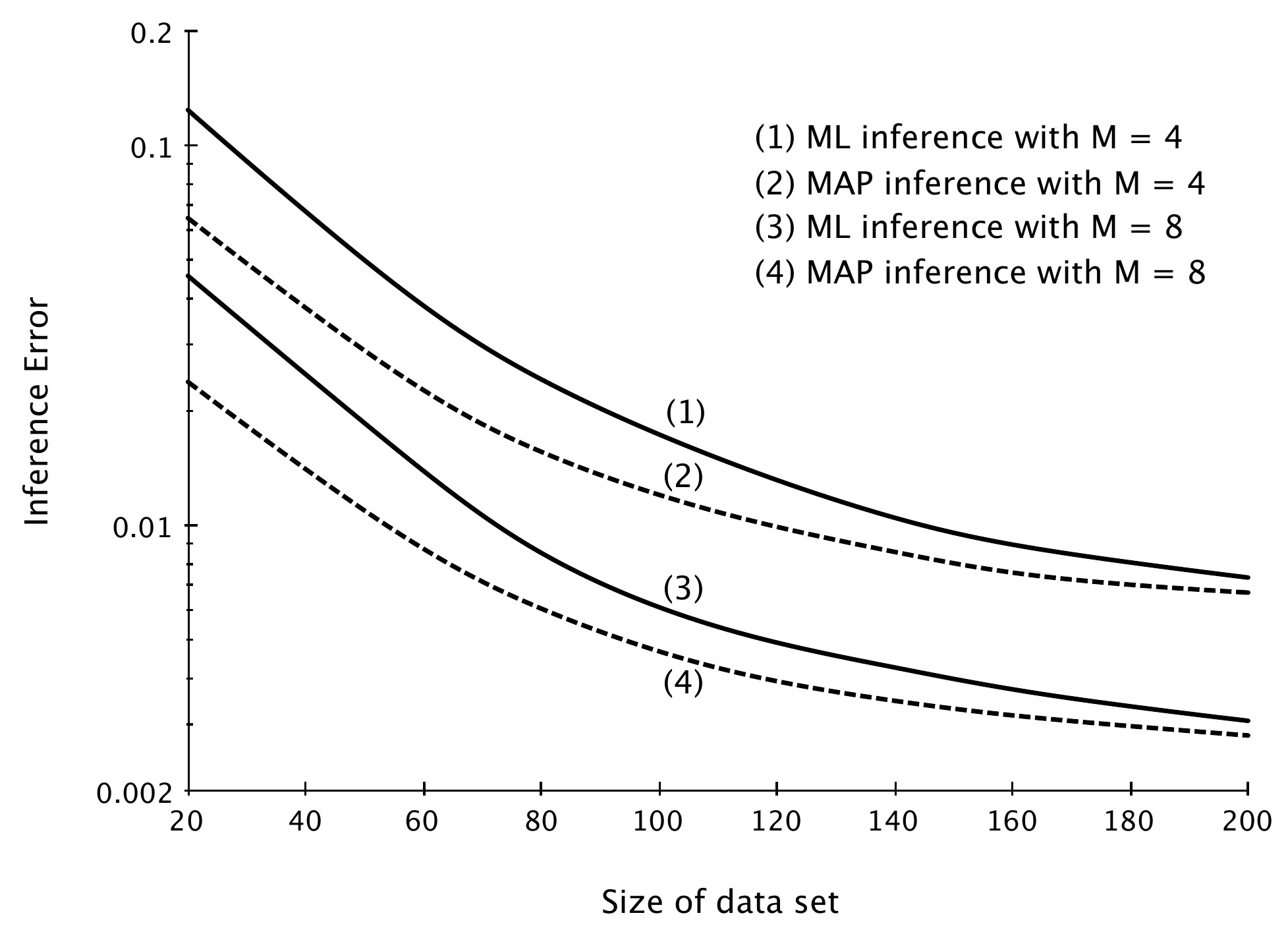}}
  \hspace{0.35in}
  \subfigure[Effect of Spatial Correlation of Sensing Values]{\label{Fig:Effect of the correlation coefficient}\includegraphics[width=0.46\textwidth]{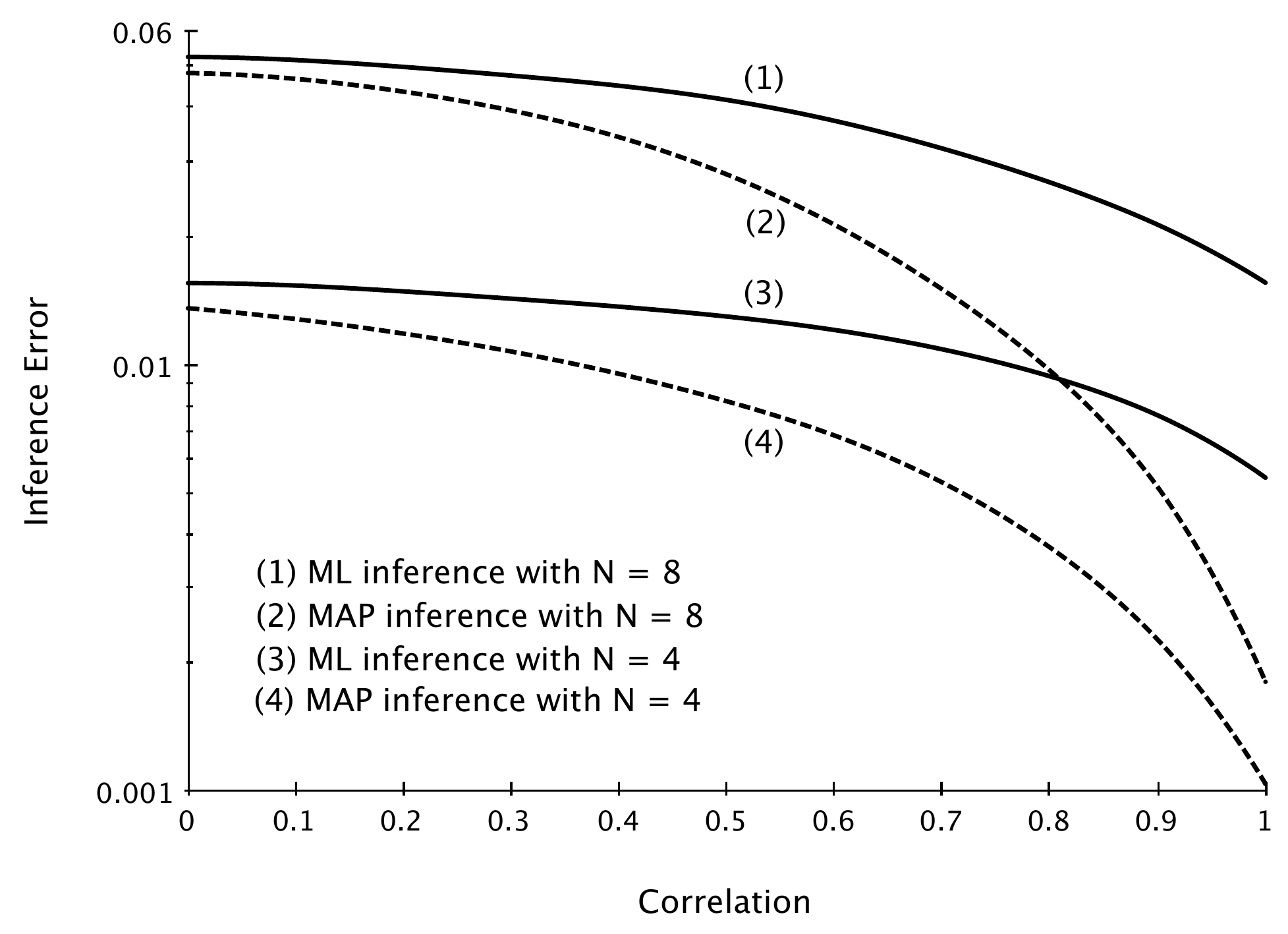}}
    \hspace{0.35in}
  \subfigure[Effect of Average Transmit SNR]{\label{Fig:Effect of SNR}\includegraphics[width=0.46\textwidth]{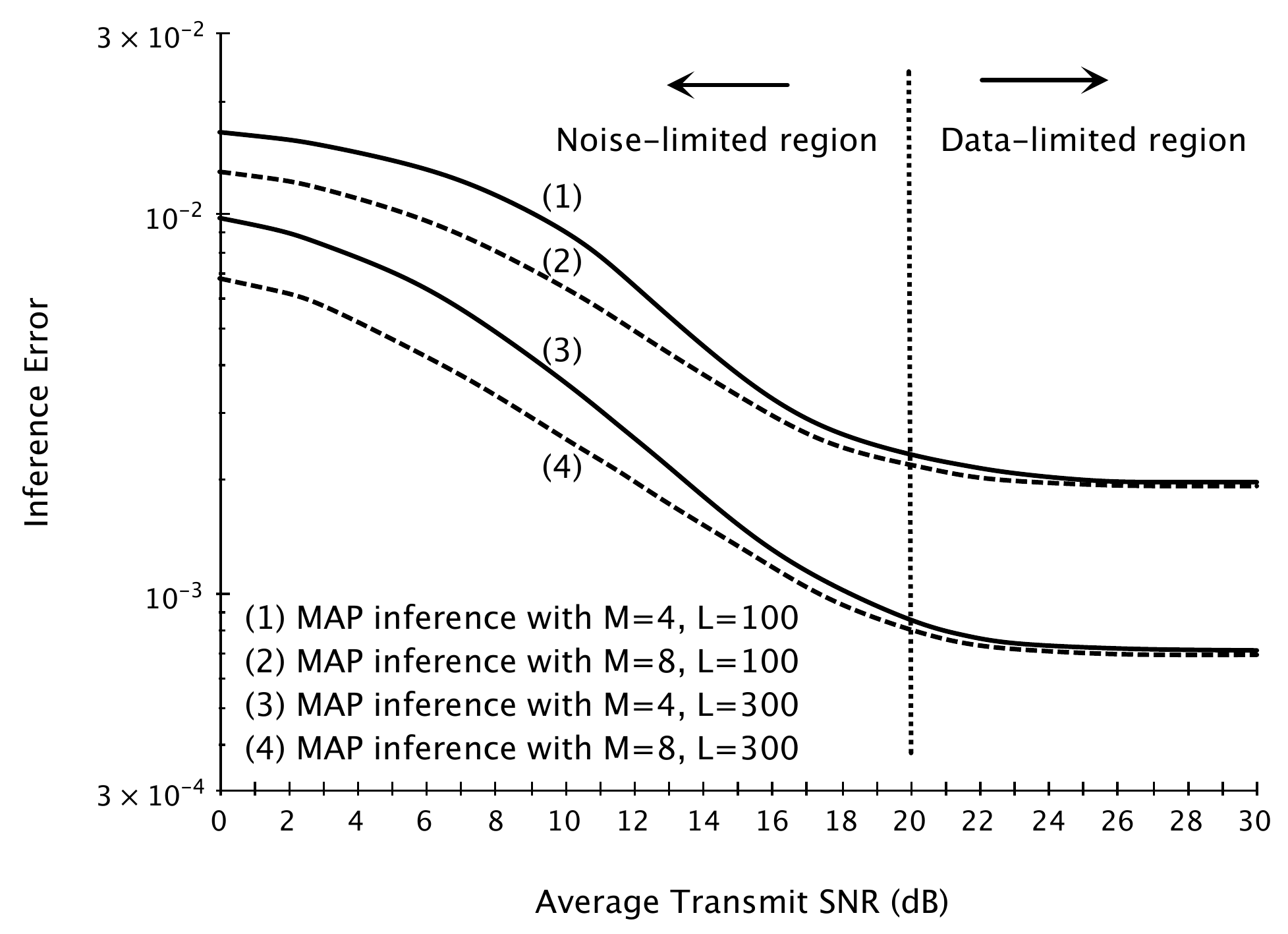}}
    \hspace{0.35in}
  \subfigure[Effect of Measurement Noise]{\label{Fig:Tightness of the VI approximation}\includegraphics[width=0.46\textwidth]{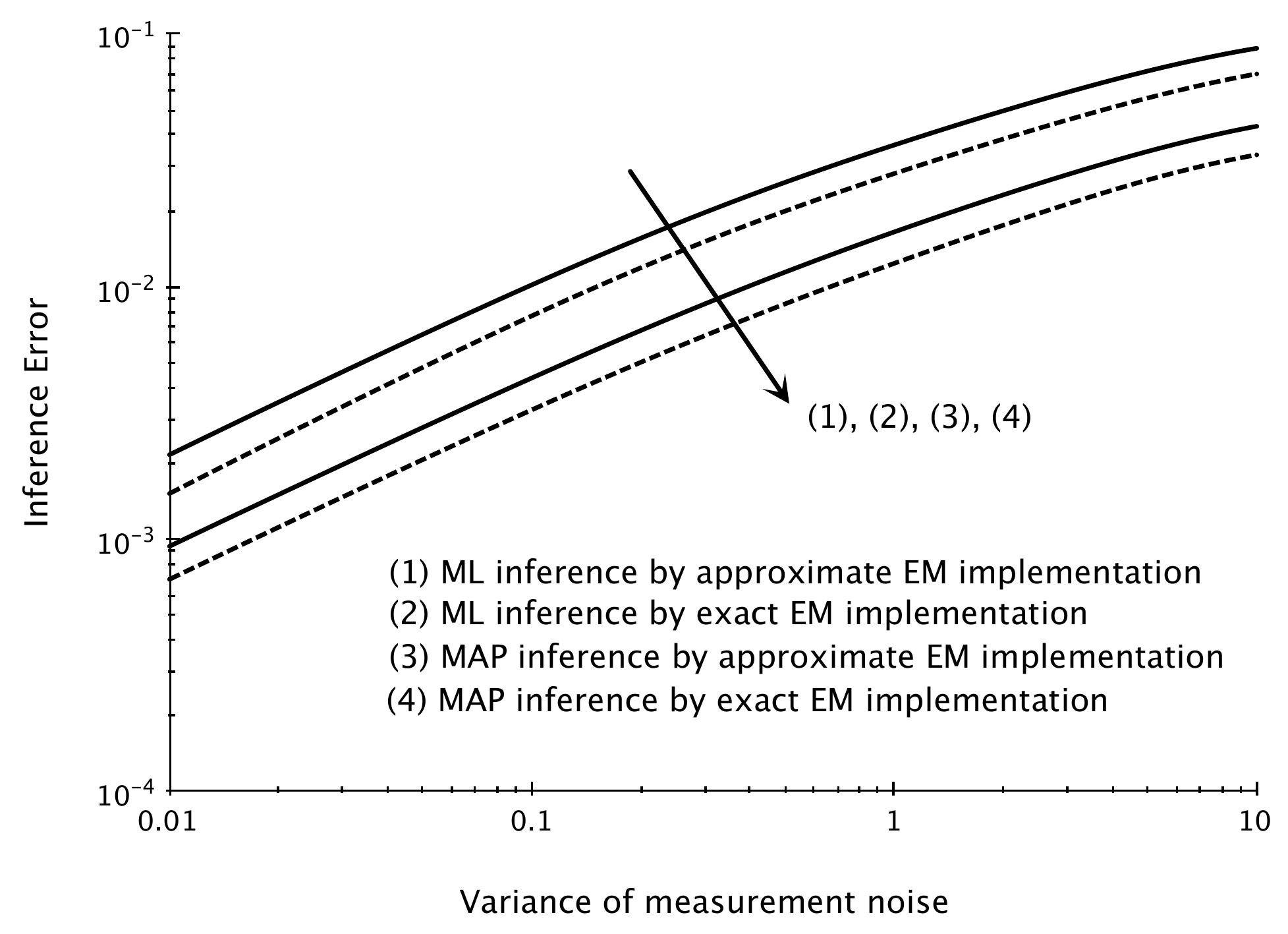}}
  \caption{The effects of system parameters on the accuracy of reader's statistical inference. }
  \label{Fig:Performance evaluation}
  \vspace{-3mm}
\end{figure*}

\subsection{Effect of Spatial Correlation of Sensing Values}
The impact of the correlation coefficient $\rho$ on the inference  accuracy is evaluated in Fig. \ref{Fig:Effect of the correlation coefficient}. The measurement noise is assumed negligible, and we set $M = 4$ and $L = 100$. One can observe from the curves that spatial correlation between sensing values helps reader inference by  improving its accuracy. In particular, MAP inference sees a much more significant performance improvement compared with the ML counterpart as $\rho$ increases. 
The  reason is that the correlation information is explicitly encoded in the prior distribution and exploited by the MAP inference but not ML. In other words, the accessibility of the prior distribution of the environment variables plays a critical role in determining how much gain can be exploited from the spatial correlation between them. The above observation  hold for different $N$ settings. However,  increasing $N$ leads to a decline in decoding accuracy as the number of unknown to be inferred by the reader grows. 

\subsection{Effect of Average Transmit SNR}
Fig. \ref{Fig:Effect of SNR} displays the curves of inference error versus the average transmit SNR for different  values of $M$ and $L$. The figure only shows the curves for MAP inference  without measurement noise. Simulation results for other cases (e.g., ML inference and the presence of measurement noise) are omitted as they all show similar trends. One can observe that as the SNR increases, the inference error sees  fast decrease in  the low-to-moderate SNR regime. However when the SNR is large (exceeds about $20$ dB), the  accuracy cannot be further improved by simply increasing the transmission  power. The reason is that in the high SNR regime the bottleneck of accurate inference is no longer the noise but the data-set size. Increasing the size (or equivalently the observation duration) can lower the error-saturation  level. Furthermore, it is also noted that increasing the number of receive antenna $M$ can considerably alleviate the distortion from noise when SNR is low due to the said diversity gain, but still suffer from the same error floor in the high SNR regime.

\vspace{-3mm}
\subsection{Tightness of Variational-Inference Approximation}
Consider the scenario with  presence of measurement noise, resulting in the dimensionality of the latent space linearly growing with the number of sensors. The performance of the exact EM implementation is compared with its approximate implementation by variational inference in Fig. \ref{Fig:Tightness of the VI approximation} over varying the variance of measurement noise. It is observed that, for both the cases of ML and MAP inference, the approximation leads to only a marginal performance loss compared with   exact EM but leads to  reduced computation complexity. This confirms the  effectiveness of variational inference. Moreover, one can also observe that the inference  performance is satisfactory  even in the regime of strong measurement noise (e.g., the error is less than $0.1$ when $\delta^2 = 10$). This shows  the robustness of reader inference  against the corruption of  measurement noise.




%

\vspace{-3mm}
\section{Concluding Remarks}\label{Section:Conclusion}
The work presents a novel design framework for backscatter sensor systems. The design features the application of machine learning to tackle hardware and signal processing limitations of backscatter sensors as well as reducing overhead (due to e.g., scheduling protocol and channel estimation and feedback). This work represents an initial investigation of a promising approach of leveraging powerful techniques from machine learning to implement intelligent sensor networks. There are many interesting directions warranting further research. For instance, for large-scale networks with mobile readers (e.g., for Smart Cities), machine learning can be applied to jointly design reader inference and path planning. As another example, inference from large-scale sensing data can be centralized  in the cloud using advanced machine learning such as deep learning. Alternatively, a hierarchical framework can be developed for data inference combining centralized and distributed inference. All such interesting directions call for novel designs integrating  machine learning and wireless communications in the same vein as the current work but in potentially more complex ways.  

\appendix

\subsection{Proof of Lemma \ref{prop:1}}\label{appendix:prop:1}
Starting from the cumulative distribution function (CDF) of $\boldsymbol \theta$, denoted by $F_{\boldsymbol \theta}(\bu)$, by definition and the mapping function shown in \eqref{mapping} we have
\begin{align}\label{CDF of theta}
F_{\boldsymbol \theta}(\bu) 
& =  \text{Prob}\l(\frac{1}{1+\exp\l(-\frac{x_1-\mu_1}{\sigma_1}\r)} < u_1, \cdots, \frac{1}{1+\exp\l(-\frac{x_N-\mu_N}{\sigma_N}\r)} < u_N\r) \notag\\
& = \text{Prob}\l(x_1 < -\sigma_1 \log\l(\frac{1}{u_1} - 1\r) + \mu_1, \cdots,  x_N < -\sigma_N \log\l(\frac{1}{u_N} - 1\r) + \mu_N \r) \notag \\
& = F_{\bx} \l(-\sigma_1 \log\l(\frac{1}{u_1} - 1\r) + \mu_1, \cdots,  -\sigma_N \log\l(\frac{1}{u_N} - 1\r) + \mu_N \r),
\end{align}
in which $F_{\bx}(\cdot)$ represent the CDF of $\bx$. Then the PDF of $\boldsymbol \theta$ can be evaluated by taking derivative of \eqref{CDF of theta} with respect to $\bu = [u_1, u_2, \cdots, u_N]^T$, namely
\begin{align}\label{PDF of theta}
p(\boldsymbol \theta) &= \left. \frac{\partial }{\partial \bu} F_{\bx} \l(-\sigma_1 \log\l(\frac{1}{u_1} - 1\r) + \mu_1, \cdots,  -\sigma_N \log\l(\frac{1}{u_N} - 1\r) + \mu_N \r) \right|_{\bu = \boldsymbol \theta} \notag \\
& = \prod_{n=1}^N \frac{\sigma_n}{\theta_n - \theta_n^2} f_{\bx}\l(-\sigma_1 \log\l(\frac{1}{\theta_1} - 1\r) + \mu_1, \cdots,  -\sigma_N \log\l(\frac{1}{\theta_N} - 1\r) + \mu_N \r),
\end{align}
where $f_{\bx}(\cdot) = p(\bx)$ denotes the PDF of $\bx$ as given by \eqref{PDF of x}. 
Then the desired result can be obtained by substituting \eqref{PDF of x} into \eqref{PDF of theta}, completing the proof. 

\subsection{Proof of Proposition \ref{prop:3}}\label{appendix:prop:3}
Let $\tilde{\cal L}(\theta)$ denote the objective function shown in \eqref{Optimization for updating theta}, then the first and second derivatives of $\tilde{\cal L}(\theta)$ with respect to $\theta$ can be expressed as follows
\begin{align}
\tilde{\cal L}'(\theta) &= \sum_{i = 1}^L \sum_{m = 0}^N q^{(i)}_T(m) \l( \frac{m}{\theta} - \frac{N-m}{1-\theta} \r). \label{First derivative of L theta}\\
\tilde{\cal L}''(\theta) &= - \sum_{i = 1}^L \sum_{m = 0}^N q^{(i)}_T(m) \l( \frac{m}{\theta^2} + \frac{N-m}{(1-\theta)^2} \r). \label{Second derivative of L theta}
\end{align}

Note that $q^{(i)}_T(m)$ is non-negative as it is a probability measure, hence it is easy to see that $\tilde{\cal L}''(\theta) < 0,\;\; \forall \theta \in (0,1)$, suggesting $\tilde{\cal L}(\theta)$ is a concave function over its domain $\theta \in (0,1)$. As a result, the maximum of $\tilde{\cal L}(\theta)$ can be achieved by setting its first derivative to zero, leading to the desired result in \eqref{update for theta}. However, one should take note that the zero of $\tilde{\cal L}'(\theta)$ may not lie in the range of $(0,1)$, thereby requiring further validation of the solution. 
Fortunately, the result shown in \eqref{update for theta} is guaranteed to meet the constraint $\theta \in (0,1)$. To see this, 
let's rewrite \eqref{update for theta} as 
$\theta^* = \frac{1}{L}\sum_{i = 1}^L \l(\frac{1}{N}\sum_{m = 0}^N q^{(i)}_T(m) m \r)$.
Note that the term inside the bracket can be identified as the expectation of $T^{(i)}$ with respect to its posterior distribution $q^{(i)}$ normalized by its maximum value $N$ which should fall in the range $(0,1)$ for each $i$. Hence a further average over $i$ still keeps the value within $(0,1)$, which completes the proof. 

\subsection{Proof of Proposition \ref{prop:4}}\label{appendix:prop:4}
Denoting $\tilde{\cal L}(\Sigma_m)$ as the objective function shown in \eqref{Optimization for updating Sigma_t}. Then taking the first and second derivatives of $\tilde{\cal L}(\Sigma_m)$ with respect to $\Sigma_m$ gives
\begin{align}
\nabla_{\Sigma_m} \tilde{\cal L}(\Sigma_m) = \sum_{i = 1}^L  q^{(i)}_T(m) \l( -\frac{M}{\Sigma_m} - \frac{\|\by^{(i)}\|^2}{2\Sigma_m^2} \r) \label{First derivative of L Sigma_t}.
\end{align}
\begin{align}
\nabla^2_{\Sigma_m} \tilde{\cal L} (\Sigma_m) = - \sum_{i = 1}^L  q^{(i)}_T(m) \l( \frac{M\Sigma_m - \|\by^{(i)}\|^2}{\Sigma_m^3} \r) = \frac{\Sigma_m - \frac{\sum_{i = 1}^L  q^{(i)}_T(m) \| {\by^{(i)}}\|_2^2}{M\sum_{i = 1}^L  q^{(i)}_T(m)}}{\Sigma_m^3/(M\sum_{i = 1}^L  q^{(i)}_T(m))}. \label{Second derivative of L Sigma_t 2}
\end{align}
Although $\nabla^2_{\Sigma_m} \tilde{\cal L} (\Sigma_m)$ is not non-positive for all $\Sigma_m \in \mathbb{R^+}$, in other words,  $\tilde{\cal L}(\Sigma_m)$ is not a concave function on its domain $\Sigma_m \in \mathbb{R}$, we can still prove that the maximum of $\tilde{\cal L}(\Sigma_m)$ is attained at the zero of its first derivative due to its quasi-concavity as shown below. Particularly, note that there is only one root for $\nabla_{\Sigma_m} \tilde{\cal L}(\Sigma_m) = 0$, namely $\Sigma_m^*$ derived in \eqref{update for Sigma_t}. Furthermore, as can be seen from \eqref{Second derivative of L Sigma_t 2} that the second derivative at that critical point is strictly less than zero, i.e., $\nabla^2_{\Sigma_m} \tilde{\cal L} (\Sigma_m^*) < 0$. The above observations plus the fact that $\tilde{\cal L}(\Sigma_m)$ is a continuous function and differentiable everywhere on its domain $\Sigma_m > 0$  lead to the desired result.

\bibliographystyle{ieeetran}
\bibliography{IEEEabrv,BackSensing_Related_Works}

%

\end{document}